\documentclass[preprint,12pt]{elsarticle}

\usepackage{graphicx}%
\usepackage{multirow}%
\usepackage{amsmath,amssymb,amsfonts}%
\usepackage{amsthm}%
\usepackage{mathrsfs}%
\usepackage[title]{appendix}%
\usepackage{xcolor}%
\usepackage{textcomp}%
\usepackage{manyfoot}%
\usepackage{booktabs}%
\usepackage{algorithm}%
\usepackage{algorithmicx}%
\usepackage{algpseudocode}%
\usepackage{listings}%
\usepackage{comment}
\DeclareFontFamily{U}{mathx}{}
\DeclareFontShape{U}{mathx}{m}{n}{<-> mathx10}{}
\DeclareSymbolFont{mathx}{U}{mathx}{m}{n}
\DeclareMathAccent{\widehat}{0}{mathx}{"70}
\DeclareMathAccent{\widecheck}{0}{mathx}{"71}
\biboptions{sort&compress}

 \newtheorem{lemma}{Lemma}
  \newtheorem{corollary}{Corollary}
\newtheorem{theorem}{Theorem}%

\newtheorem{proposition}[theorem]{Proposition}%

\newtheorem{example}{Example}%
\newtheorem{remark}{Remark}%

\newtheorem{definition}{Definition}%

\usepackage{physics}
\usepackage{xcolor,soul}

 \newcommand{\x}{ {\bf x}}
  
\newcommand{\z}{ {\bf z}}
\DeclareFontFamily{U}{dice3d}{}
\DeclareFontShape{U}{dice3d}{m}{n}{<-> s*[9] dice3d}{}

\usepackage{phaistos}
\usepackage{tikz}
\newcommand*\coin[1]{\tikz[baseline=(char.base),scale=0.7, transform shape]{
            \node[shape=circle,draw,minimum size=2mm] (char) {#1};}}

\journal{}

\makeatletter
\def\ps@pprintTitle{%
  \let\@oddhead\@empty
  \let\@evenhead\@empty
  \let\@oddfoot\@empty
  \let\@evenfoot\@oddfoot
}
\makeatother
\begin{document}

\begin{frontmatter}

\title{Connecting classical finite exchangeability to quantum theory and indistinguishability}

\author[1,2]{Alessio Benavoli} %
\author[3,4]{Alessandro Facchini} %
\author[3]{Marco Zaffalon} %
\affiliation[1]{organization={School of Computer Science and Statistics, Trinity College Dublin},%
            city={Dublin},
            country={Ireland}}

\affiliation[2]{organization={Trinity Quantum Alliance, Trinity College Dublin},%
            addressline={Unit 16 Trinity Technology and Enterprise Centre}, 
            city={Dublin},
            country={Ireland}}

\affiliation[3]{organization={SUPSI, DTI, Istituto Dalle Molle di Studi sull'Intelligenza Artificiale (USI-SUPSI)},%
            city={Lugano},
            country={Switzerland}}
            
\affiliation[4]{organization={Management in Networked and Digital Societies (MINDS) Department, Kozminski University},
      \city{Warsaw},\country{Poland}}

\begin{abstract}
Exchangeability is a fundamental concept in probability theory and statistics. It allows to model situations where the order of observations does not matter. The classical de Finetti's theorem provides a representation of infinitely exchangeable sequences of random variables as mixtures of independent and identically distributed variables. The quantum de Finetti theorem extends this result to symmetric quantum states on tensor product Hilbert spaces. It is well known that both theorems do not hold for finitely exchangeable sequences. The aim of this work is to investigate two lesser-known representation theorems, which were developed in classical probability theory to extend de Finetti's theorem
to finitely exchangeable sequences by using quasi-probabilities and quasi-expectations. With the aid of these theorems, we  illustrate how a de Finetti-like representation theorem for finitely exchangeable sequences can be formulated through a mathematical representation which is formally equivalent to quantum theory
(with boson-symmetric density matrices). We then show a promising application of this connection to the challenge of defining entanglement for indistinguishable bosons.
\end{abstract}

\begin{keyword}
Exchangeability \sep de Finetti \sep quantum de Finetti \sep bosons
\end{keyword}

\end{frontmatter}

\section{Introduction}
\label{sec:intro}

Let $(t_1 , t_2,\dots,t_r)$ be a sequence of $r$ random variables, each variable taking values in a finite possibility space $\Omega$, and let $P(t_1 , t_2,\dots,t_r)$ be their joint probability distribution. When $P(t_1 , t_2,\dots,t_r)$  is symmetric to permutations of the labels of the variables $t_i$, the sequence is called finitely exchangeable with respect to $P$.
The classical de Finetti theorem states that, whenever the considered sequence is finitely exchangeable with respect to $P$ and for every $s > r$ there is a finitely exchangeable sequence $(t_1 , t_2,\dots,t_s)$ with respect to a joint probability $P'$  such that $P$ coincides with the marginal of $P'$ on $(t_1,t_2,\dots,t_{r})$,
then  $P(t_1 , t_2,\dots,t_r)$ can be represented as:
\begin{equation}
\label{eq:cdF}
\begin{aligned}
 P(t_1 , t_2,\dots,t_r)&= E[P(t_1)P( t_2)\cdots P(t_r)]\\
 &:=\int P(t_1)P( t_2)\cdots P(t_r) dq(P),
\end{aligned}
\end{equation}
 with $q$ a probability measure on the marginal probability distribution $P(t)$ (of one variable).

One well known issue with this theorem is that it necessarily requires an infinite  sequence extending $(t_1 , t_2,\dots,t_r)$ for which exchangeability holds for each segment longer than $r$. Without such requirement, and thus by sticking to a finitely exchangeable sequence only, the representation of the operator $E$ in Eq.~ \eqref{eq:cdF} as a probabilistic mixture of identical product distributions fails in general. However, if the finitely exchangeable sequence of $r$ variables is part of a finite but large enough finitely excheangable sequence of $s >> r$ variables, the joint distribution of the initial $r$ random
variables can be well-approximately represented in a de Finetti-like manner. Much of the work on finite exchangeability in classical probability has actually focused on deriving analytical bounds for the error of this approximation \cite{diaconis1977finite,diaconis1980finite}. 

A quantum analogue of de Finetti's theorem has been derived in \cite{hudson1976locally,caves2002unknown,renner2007symmetry,christandl2007one,konig2005finetti,koenig2009most}. Widely used in quantum information theory, it states the following. Let $\rho^{(r)}$ be  a density matrix on the $r$-tensor product Hilbert space $\mathcal{H}^{\otimes r}$ ({ if $\dim(\mathcal{H})=d$, then $\rho^{(r)}$ is a $d^r \times d^r$ complex Hermitian  positive semi-definite matrix with trace one}). Whenever $\rho^{(r)}$ is symmetric, that is it is such that $\Pi_{sym}\rho^{(r)} \Pi_{sym}^{\dagger} = \rho^{(r)}$ with the matrix $\Pi_{sym}$ being the so-called `symmetriser',  and there is a symmetric density matrix $\rho^{(s)}$ on  $\mathcal{H}^{\otimes s}$ such that $\rho^{(r)} = Tr_{s-r}(\rho^{(s)})$ for
every $s > r$, where $Tr_{s-r}$ denotes the partial-trace ({ which performs a marginalisation operation over the subset of the 
$s-r$ subsystems}), then it holds that
\begin{equation}
\label{eq:qdF}
  \rho^r = \int (\sigma^{\otimes s }) dq(\sigma) ,
\end{equation}
with $\sigma$ a density matrix in $\mathcal{H}$, { $\sigma^{\otimes s }$ denotes the tensor product of $\sigma$ with itself $s$-times},  and $q$ a probability measure on the set of all density matrices in $\mathcal{H}$.
The research on analytical bounds has also been extended to the quantum setting \cite{renner2007symmetry,christandl2007one} with applications to quantum cryptography  \cite{renner2008security,su2020improved}, emergence of classical reality \cite{brandao2015generic} and, more recently,  the black hole information puzzle  \cite{renner2021black}.

In this paper we investigate two other representation theorems that extend the classical de Finetti's theorem so as to cover also the case of finitely exchangeable sequences. Both modify the definition of operator $E$ in Eq.~\eqref{eq:cdF}. The first solution keeps the same representation as in the original theorem, except that $q$ is a normalised signed-measure (a quasi-probability) \cite{dellacherie1982probabilities,jaynes1982some,kerns2006definetti} (which also holds true in the quantum de-Finetti case \cite{christandl2007one,Ammari2008,Lewin2014}). The second solution instead employs a \emph{quasi}-expectation operator applied to certain polynomial functions \cite{jaynes1982some,kerns2006definetti,harper2007probability,de2012exchangeability,zaffalon2019b}. 
Our objective is to illustrate, with the aid of the above results, how de Finetti-like representation theorems for finitely exchangeable systems  can be formulated using a mathematical representation which is formally equivalent to quantum theory (QT for short). This objective is achieved in two steps. 
First, we will use polynomials of complex conjugate variables to re-derive the extension of the classical de Finetti's theorem based on \textit{quasi}-expectation operators but expressed with the same mathematical objects as in QT.
Second, we will  demonstrate that such \textit{quasi}-expectation operators are equivalent to boson-symmetric density matrices. In doing so, we will be able to simulate finite exchangeable sequences using quantum experiments with bosons.
In summary, we will show that a de Finetti-like representation of finite exchangeability leads to a representation theorem which encompasses both classical and quantum theories as special cases. 
We will then discuss how such representation is related to previous results 
\cite{benavoli2021indi,BENAVOLI2022389,doherty2004complete} and to
other formulations of QT involving  quasi-probability densities and expectations. %
Finally, we will use this representation to provide fresh insights into the challenge of defining entanglement for indistinguishable bosons. There are at
least five main different definitions of entanglement for indistinguishable bosons in literature.  Recently, \cite{benatti2020entanglement} attempted to systematise these different definitions by proposing a set of properties that should be used to characterise entanglement for indistinguishable particles. Through our
connection between QT and classical finitely exchangeable probabilities, we will show that the definitions of two properties given in \cite{benatti2020entanglement} lead to clearly interpretable contradictions when applied to classical finitely exchangeable probabilities. For instance, one of these properties, \textit{local operators}, would imply that the uniform distribution is finitely but not infinitely exchangeable, which is clearly false. We then use this analysis to reformulate this property in a way that avoids this issue.\\
To provide a clearer and more practical understanding of the different representation theorems, in what follows we use a dice rolling experiment as a case study. These examples  will later help to clarify the application of exchangeability to the problem of defining entanglement for indistinguishable bosons. 

\section{The classical de Finetti's representation theorem}
Consider a dice whose possibility space is given by its six faces:
$$\Omega:=\Big\{\begin{gathered}\text{\scriptsize \usefont{U}{dice3d}{m}{n} \text{ 3a}}, \text{\scriptsize \usefont{U}{dice3d}{m}{n} \text{3b}}, \text{\scriptsize \usefont{U}{dice3d}{m}{n} \text{2b}}, \text{\scriptsize \usefont{U}{dice3d}{m}{n} \text{2c}},\text{\scriptsize \usefont{U}{dice3d}{m}{n}  \text{3c}}, \text{\scriptsize \usefont{U}{dice3d}{m}{n} \text{2d} }\end{gathered}\Big\}.
 $$ 
 Let us assume that the upward-facing side of the dice is the one that is facing the reader. By $t_1,t_2,\dots,t_r$ we denote the results of $r$-rolls of the dice. For example, for $r=3$, we can have the three outcomes:
   \begin{center}
 {\usefont{U}{dice3d}{m}{n}$\stackrel{\text{\scriptsize 3b}}{t_1}$ $\stackrel{\text{\scriptsize 2b}}{t_2}$
 $\stackrel{\text{\scriptsize 2b}}{t_3}$}
\end{center}
that is $t_1=2$, $t_2=3$, and $t_3=3$.\\
\begin{definition}
\label{def:finexcg}
Let $t_1,t_2,\dots,t_r$ a sequence of variables and $P$ a joint probability. The sequence is said to be \textit{finitely exchangeable} with respect to $P$, if 
$$
P(t_1,t_2,\dots,t_r)=P(t_{\pi(1)},t_{\pi(2)},\dots,t_{\pi(r)}),
$$
for any permutation $\pi$ of the index sequence $(1,2,\dots,r)$.\\
\end{definition}
Considering the previous example. To say that the three rolls are finitely exchangeable means that 
$$
 P\left(\begin{gathered}\text{\tiny \usefont{U}{dice3d}{m}{n}$\stackrel{\text{3b}}{}$ $\stackrel{\text{2b}}{}$
 $\stackrel{\text{2b}}{}$}\end{gathered}\right)= P\left(\begin{gathered}\text{\tiny \usefont{U}{dice3d}{m}{n}$\stackrel{\text{2b}}{}$ 
 $\stackrel{\text{3b}}{}$
 $\stackrel{\text{2b}}{}$}\end{gathered}\right)=P\left(\begin{gathered}\text{\tiny \usefont{U}{dice3d}{m}{n}$\stackrel{\text{2b}}{}$ 
 $\stackrel{\text{2b}}{}$
 $\stackrel{\text{3b}}{}$}\end{gathered}\right).
$$

\begin{definition}
\label{def:excg}
The sequence is called
 \textit{infinitely exchangeable} (with respect to $P$) if it is finitely exchangeable  (with respect to $P$) and if for any  $s > r$,
there is a distribution $P'$ with respect to which $t_1,t_2,\dots,t_{s}$ is finitely exchangeable and 
such that
$$
P(t_1,\dots,t_{r})=\sum_{t_{r+1},\dots, t_{s}} P'(t_{1},\dots,t_r,t_{r+1},\dots, t_{s}).
$$
\end{definition}
Stated otherwise, 
an infinitely exchangeable sequence can be understood as a subsequence 
{\small
$$
{\dots,t_{k-1},\,t_{k}},\,\overbrace{{t_{k+1},\,t_{k+2},\,\dots,\,t_{k+n-1},\,t_{k+r},}}^{\text{ subsequence of length $r$}}\,{t_{k+r+1},\dots}  
$$}
\\ {
of an infinite sequence of random variables whose joint distribution is invariant under permutations, meaning that their order is irrelevant.
}

 Given $r$ exchangeable rolls $t_1,t_2,\dots,t_r$, their output 
 is fully characterised by the counts:
    \begin{center}
 {\usefont{U}{dice3d}{m}{n}$\stackrel{\text{\scriptsize 3a}}{n_1}$ $\stackrel{\text{\scriptsize 3b}}{n_2}$
 $\stackrel{\text{\scriptsize 2b}}{n_3}$
  $\stackrel{\text{\scriptsize 2c}}{n_4}$
   $\stackrel{\text{\scriptsize 3c}}{n_5}$
    $\stackrel{\text{\scriptsize 2d}}{n_6}$
 }
\end{center}
where $n_i$ denotes the number of times the dice landed on i-th face in the $r$-rolls. 
Based on this fact, de Finetti  proved his famous \textit{representation theorem}.\footnote{De Finetti actually proved his theorem for the binary case, a coin. His result can easily be extended to the dice.}\\
\begin{proposition}[\cite{finetti1937}]
\label{prop:definetti}
Let  $\Omega$ be the possibility spaces given by the six faces of a dice, and let $(t_{1}, \dots, t_{r})$ be a sequence of random variables that is infinitely exchangeable with probability  measure $P$. Then there exists a distribution function $q$ such that
\begin{equation}
    \label{eq:reprTh0}
P(t_{1},\dots,t_{r}) = \int_\Theta \theta_1^{n_1}\theta_2^{n_2}\cdots \theta_5^{n_5}\theta_6^{n_6} dq({\boldsymbol \theta}), 
\end{equation}
where ${\boldsymbol \theta}=[\theta_1,\theta_2,\dots,\theta_6]^\top$ are the probabilities of the corresponding faces and whose values belong to the possibility space: 
$$
\Theta=\left\{\boldsymbol{\theta} \in \mathbb{R}^6: \theta_i\geq0,~\sum_{i=1}^6 \theta_i=1\right\},
$$ 
and $n_i=\sum_{j=1}^r \delta_{i}(t_j)$, where $\delta_{a}(b)$ is the Kronecker delta\footnote{The Kronecker delta is defined as follows $\delta_a(b)=1$ if $a=b$ and zero otherwise.}, that is $n_i$ is the number of occurrences of the i-th face in the sequence $t_1,\dots,t_r$ (this means that $\sum_{i=1}^6 n_i=r$).\\
\end{proposition}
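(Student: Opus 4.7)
The plan is to follow the classical constructive argument: extend the sequence, represent $P$ as a finite mixture of hypergeometric-type sampling measures on the simplex, and then take a weak limit as the extension length grows. First, by Definition~\ref{def:finexcg}, $P(t_1,\dots,t_r)$ depends only on the count vector $\mathbf{n}=(n_1,\dots,n_6)$, so it suffices to exhibit a distribution function $q$ on $\Theta$ with
\[
P(t_1,\dots,t_r)=\int_\Theta \theta_1^{n_1}\cdots\theta_6^{n_6}\,dq(\boldsymbol{\theta}).
\]

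For each $s>r$, infinite exchangeability guarantees an exchangeable extension $P'_s$ on sequences of length $s$. I would write $P'_s$ as a mixture indexed by its own count vector $\mathbf{m}=(m_1,\dots,m_6)$ with $\sum_i m_i=s$: let $Q_s$ be the push-forward onto $\Theta$ of the count distribution under the map $\mathbf{m}\mapsto \mathbf{m}/s$. Conditional on $\mathbf{m}$, exchangeability of $P'_s$ makes the first $r$ draws a uniform sample-without-replacement, so the conditional probability of the specific sequence $(t_1,\dots,t_r)$ with counts $\mathbf{n}$ equals
\[
\frac{m_1^{\underline{n_1}}\,m_2^{\underline{n_2}}\cdots m_6^{\underline{n_6}}}{s^{\underline{r}}},
\]
with $x^{\underline{k}}:=x(x-1)\cdots(x-k+1)$, whenever $m_i\geq n_i$ for all $i$. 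Marginalizing the extension back to $r$ variables then gives
\[
P(t_1,\dots,t_r)=\int_\Theta f_s(\boldsymbol{\theta})\,dQ_s(\boldsymbol{\theta}),
\]
where $f_s(\boldsymbol{\theta})$ is the continuous function on $\Theta$ obtained by substituting $m_i=s\theta_i$ into the falling-factorial ratio above.

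The second step is a compactness argument. Since $\Theta$ is compact, the set of Borel probability measures on $\Theta$ is weakly sequentially compact (Helly/Prokhorov), so there is a subsequence $Q_{s_k}$ converging weakly to some limit $q$. A direct expansion shows that $f_s(\boldsymbol{\theta})\to \theta_1^{n_1}\cdots\theta_6^{n_6}$ uniformly on $\Theta$ as $s\to\infty$, because each factor $m_i^{\underline{n_i}}/s^{n_i}=\theta_i(\theta_i-\tfrac{1}{s})\cdots(\theta_i-\tfrac{n_i-1}{s})$ converges uniformly to $\theta_i^{n_i}$, and the denominator scales as $s^r$. Combining weak convergence of $Q_{s_k}$ with uniform convergence of $f_{s_k}$ to the target polynomial allows me to pass to the limit under the integral and obtain exactly \eqref{eq:reprTh0}.

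The main obstacle I foresee is justifying the interchange of limit and integral cleanly. Weak convergence only guarantees the interchange against a fixed continuous function, while here both the measure $Q_{s}$ and the integrand $f_s$ are varying with $s$. I would handle this by splitting $\int f_s\,dQ_{s_k} - \int \theta^{\mathbf{n}}\,dq$ into (i) $\int(f_{s_k}-\theta^{\mathbf{n}})\,dQ_{s_k}$, bounded by $\|f_{s_k}-\theta^{\mathbf{n}}\|_\infty$ which tends to zero by the uniform estimate, and (ii) $\int \theta^{\mathbf{n}}\,d(Q_{s_k}-q)$, which tends to zero by weak convergence applied to the continuous bounded polynomial $\theta^{\mathbf{n}}$. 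A subsidiary but technical worry is the constraint $m_i\geq n_i$ excluding some count vectors from the hypergeometric formula; for fixed $r$ and growing $s$ this failure region has $Q_s$-mass tending to zero because it requires some $\theta_i<n_i/s$, and can be absorbed into the error term.
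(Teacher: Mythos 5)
The paper does not actually prove Proposition~\ref{prop:definetti}: it is quoted as a classical result, attributed to de Finetti's 1937 paper, with only a footnote remarking that the original binary (coin) proof extends to the six-outcome case. So there is no in-paper argument to compare against; your proposal has to stand on its own, and it does. What you have written is the standard Diaconis--Freedman-style derivation: marginalising the length-$s$ extension conditional on its count vector $\mathbf{m}$ yields exactly the multivariate hypergeometric weight $\prod_i m_i^{\underline{n_i}}/s^{\underline{r}}$, so $P=\int_\Theta f_s\,dQ_s$ holds exactly for every $s$; the falling-factorial ratios converge uniformly on the compact simplex to $\theta_1^{n_1}\cdots\theta_6^{n_6}$; and weak compactness of probability measures on $\Theta$ plus your two-term splitting of the error handles the simultaneous variation of integrand and measure correctly. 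Two small remarks. First, your ``subsidiary worry'' about count vectors with $m_i<n_i$ is a non-issue: $m_i^{\underline{n_i}}=m_i(m_i-1)\cdots(m_i-n_i+1)$ contains the factor $m_i-m_i=0$ whenever $0\le m_i\le n_i-1$, so the formula already returns the correct conditional probability of zero there and no mass needs to be excluded. Second, it is worth stating explicitly that the subsequence $Q_{s_k}$ is extracted once, independently of $\mathbf{n}$, so the single limit $q$ serves all finitely many count vectors with $\sum_i n_i=r$ simultaneously; your argument implicitly does this, but the proposition requires one $q$ for all of them. Note also that Definition~\ref{def:excg} only asserts the existence of some exchangeable extension for each $s$, not a projectively consistent family; your proof correctly needs only the former.
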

In the context of Bayesian statistics,  the theorem states that any infinitely exchangeable sequence can be seen as generated by first selecting a probability distribution $\boldsymbol{\theta}$ from some prior $q$ (probability distribution on the space of probability distributions) and then letting the $t_i$ to be i.i.d.\ with common distribution $\boldsymbol{\theta}$.
 For a selected ${\boldsymbol \theta}$, this for instance  means that $P(t_1=2,t_2=3,t_3=3)=\theta_2\theta_3^2$ (the product comes from the independence assumption).  

As mentioned in Section \ref{sec:intro}, one known issue with the above  theorem is that the assumption of an infinite exchangeable sequence is necessary for the specific representation to hold \cite{deFinetti1930,diaconis1980}.
As a counter-example, consider:
\begin{equation}
     \label{eq:psym}
\begin{aligned}
P(t_1=i,t_2=i)&=0,\\~~   P(t_1=i,t_2=j)&= P(t_1=j,t_2=i)=\frac{1}{30}\\
\end{aligned}
 \end{equation}
 for all $i \neq j =1,2,\dots,6$. The variables $t_1,t_2$ are { finitely} exchangeable. Assume that a representation like \eqref{eq:reprTh0} holds. Then   
 $$
 0=P(t_1=i,t_2=i)=\int_\Theta \theta_i^{2} dq({\boldsymbol \theta}) 
 $$ for all $i$, would imply that $q$ puts mass $1$ at all the $\theta_i=0$, which is impossible. This means that, although the $P$ in \eqref{eq:psym} is a valid probability, we cannot express it as \eqref{eq:reprTh0}. In other words, exchangeability is not equivalent to independence.

Examples of situations that are truly finitely exchangeable were already discussed by de Finetti in \cite{deFinetti1930}. They can be illustrated by the act of selecting items from an urn without replacement.\\

\begin{example}
\label{ex:urn}
Suppose we have an urn containing one black (B) ball and one white (W) ball. Draw out balls, one at a
 time and without replacement, and note the colour. 
 Then we have that
 
 $$
\begin{aligned}
P(t_1=B,t_2=B)&=0,\\
P(t_1=B,t_2=W)&=P(t_1=W,t_2=B)=0.5,\\
P(t_1=W,t_2=W)&=0,
\end{aligned}
 $$
which is finitely but not infinitely exchangeable. 

In QT, we can rule out this type of urn-model dependence by choosing the direction of the measurement randomly and independently for each particle. This is done in Bell's experiments. This comment is relevant for the relationship between exchangeability and entanglement, which we discuss in Section \ref{sec:Ent}.\\
\end{example}

Kendall \cite{kendall1966finite} proved that every finite system of exchangeable events is indeed equivalent to a random sampling scheme without replacement, where the number of items in the sampling has an arbitrary distribution.

While it is evident that de Finetti's theorem can be invalid for finite exchangeable sequences, for a large enough integer $s>r$, the distribution of the initial $r$ random variables of an exchangeable vector of length  $s$ can be approximately represented as in Eq. \eqref{eq:reprTh0}. As previously mentioned, much of the work on finite exchangeability has focused on deriving analytical bounds for the error of this approximation \cite{diaconis1977finite,diaconis1980finite}. 
To clarify the meaning of the term `bound'  in this context, we use the dice-rolling example to discuss the general problem of bounding a linear function of event probabilities, where the underlying probability measure $P_r$ is a probability distribution  of a finitely exchangeable sequence  $t_1,\dots,t_r$. In the rest of the manuscript, the subscript $r$ in $P_r$ will be used  to indicate the length of the sequence $t_1,\dots,t_r$.\\

\begin{example}
\label{eq:diffP}
Consider $r=2$ (two rolls) and assume, for example, we aim to find a lower bound for the following quantity
{\small
\begin{equation}
\label{eq:boundP2}
P_2\left(\begin{gathered}\text{\tiny \usefont{U}{dice3d}{m}{n}$\stackrel{\text{3a}}{}$ $\stackrel{\text{3a}}{}$}\end{gathered}\right)- P_2\left(\begin{gathered}\text{\tiny \usefont{U}{dice3d}{m}{n}$\stackrel{\text{3a}}{}$ 
 $\stackrel{\text{3b}}{}$
}\end{gathered}\right)+P_2\left(\begin{gathered}\text{\tiny \usefont{U}{dice3d}{m}{n}$\stackrel{\text{3b}}{}$ 
 $\stackrel{\text{3b}}{}$}\end{gathered}\right),
\end{equation}
}
where $P_2$ is a probability distribution  of the finitely exchangeable sequence  $t_1,t_2$ (two rolls). In other words, among all  probability measures $P_2$ we aim to find the one which minimises \eqref{eq:boundP2}. We will focus only on the lower bound (minimisation) as an example.

Therefore, our goal is to compute:
{\small
$$
v_2=\min_{P_2} \Bigg( P_2\left(\begin{gathered}\text{\tiny \usefont{U}{dice3d}{m}{n}$\stackrel{\text{3a}}{}$ $\stackrel{\text{3a}}{}$}\end{gathered}\right)- P_2\left(\begin{gathered}\text{\tiny \usefont{U}{dice3d}{m}{n}$\stackrel{\text{3a}}{}$ 
 $\stackrel{\text{3b}}{}$
}\end{gathered}\right)+P_2\left(\begin{gathered}\text{\tiny \usefont{U}{dice3d}{m}{n}$\stackrel{\text{3b}}{}$ 
 $\stackrel{\text{3b}}{}$}\end{gathered}\right)\Bigg).
$$
}
It is not difficult to see that the minimum is $v_2=-0.5$ and the  finitely exchangeable $P_2$ which achieves it is given by setting $P_2\left(\begin{gathered}\text{\tiny \usefont{U}{dice3d}{m}{n}$\stackrel{\text{3a}}{}$ 
 $\stackrel{\text{3b}}{}$
}\end{gathered}\right)=P_2\left(\begin{gathered}\text{\tiny \usefont{U}{dice3d}{m}{n}$\stackrel{\text{3b}}{}$ 
 $\stackrel{\text{3a}}{}$
}\end{gathered}\right)=0.5$ and zero otherwise.

Now assume that these two rolls are part of a finitely exchangeable sequence of three rolls, and let $P_{3}$ be the relative finitely exchangeable probability. In this case, we aim to solve  the  problem
{\small
$$
v_3=\min_{P_3} \Bigg(P_3\left(\begin{gathered}\text{\tiny \usefont{U}{dice3d}{m}{n}$\stackrel{\text{3a}}{}$ $\stackrel{\text{3a}}{}$}\end{gathered}\right)- P_3\left(\begin{gathered}\text{\tiny \usefont{U}{dice3d}{m}{n}$\stackrel{\text{3a}}{}$ 
 $\stackrel{\text{3b}}{}$
}\end{gathered}\right)+P_3\left(\begin{gathered}\text{\tiny \usefont{U}{dice3d}{m}{n}$\stackrel{\text{3b}}{}$ 
 $\stackrel{\text{3b}}{}$}\end{gathered}\right)\Bigg),
$$
}
where $P_3(t_1,t_2)$ is computed from $P_3(t_1,t_2,t_3)$ via marginalisation, that is $P_3(t_1=a,t_2=b)=\sum_{i=1}^6P_3(t_1=a,t_2=b,t_3=i)$ for every $a,b \in \{1,2,\dots,6\}$.
Note that, since $P_3$ is the probability distribution of the exchangeable sequence $t_1,t_2,t_3$ then $P_3(t_1=a,t_2=b,t_3=i)=P_3(t_1=b,t_2=a,t_3=i)=P_3(t_1=i,t_2=b,t_3=a)$ and so on for all the $3!$ permutations. In this case, computing $v_3$ is more challenging because exchangeability imposes many constraints on $P_3$, such as:
{\small
$$
\begin{aligned}
P_3\left(\begin{gathered}\text{\tiny \usefont{U}{dice3d}{m}{n}$\stackrel{\text{3a}}{}$ $\stackrel{\text{3b}}{}$ $\stackrel{\text{3a}}{}$}\end{gathered}\right)&=P_3\left(\begin{gathered}\text{\tiny \usefont{U}{dice3d}{m}{n}$\stackrel{\text{3a}}{}$ $\stackrel{\text{3a}}{}$ $\stackrel{\text{3b}}{}$}\end{gathered}\right)=P_3\left(\begin{gathered}\text{\tiny \usefont{U}{dice3d}{m}{n}$\stackrel{\text{3b}}{}$ $\stackrel{\text{3a}}{}$ $\stackrel{\text{3a}}{}$}\end{gathered}\right).
\end{aligned}
$$
}
In particular, this means that, in this case,  to compute $v_3$ we cannot set  $P_3\left(\begin{gathered}\text{\tiny \usefont{U}{dice3d}{m}{n}$\stackrel{\text{3a}}{}$ $\stackrel{\text{3a}}{}$}\end{gathered}\right)$ equal to zero. Indeed, since  its value depends on the value of $P_3\left(\begin{gathered}\text{\tiny \usefont{U}{dice3d}{m}{n}$\stackrel{\text{3a}}{}$ $\stackrel{\text{3a}}{}$ $\stackrel{\text{3b}}{}$}\end{gathered}\right)$ via marginalisation, then if $P_3\left(\begin{gathered}\text{\tiny \usefont{U}{dice3d}{m}{n}$\stackrel{\text{3a}}{}$ $\stackrel{\text{3a}}{}$}\end{gathered}\right)$  were zero, than also $P_3\left(\begin{gathered}\text{\tiny \usefont{U}{dice3d}{m}{n}$\stackrel{\text{3a}}{}$ $\stackrel{\text{3b}}{}$}\end{gathered}\right)$ would be zero. We will later show that in this case the solution of the minimisation problem is $v_3\approx -0.167$.

We can also show that when the two rolls are part of an infinitely exchangeable sequence of rolls it holds that
{\small
$$
\begin{aligned}
v_{\infty}&=\inf_{P_\infty} \Bigg(P_\infty\left(\begin{gathered}\text{\tiny \usefont{U}{dice3d}{m}{n}$\stackrel{\text{3a}}{}$ $\stackrel{\text{3a}}{}$}\end{gathered}\right)- P_\infty\left(\begin{gathered}\text{\tiny \usefont{U}{dice3d}{m}{n}$\stackrel{\text{3a}}{}$ 
 $\stackrel{\text{3b}}{}$
}\end{gathered}\right)+P_\infty\left(\begin{gathered}\text{\tiny \usefont{U}{dice3d}{m}{n}$\stackrel{\text{3b}}{}$ 
 $\stackrel{\text{3b}}{}$}\end{gathered}\right)\Bigg)\\
 &=0.
 \end{aligned}
$$
}
In summary, we can compute  lower bounds of $P_n\left(\begin{gathered}\text{\tiny \usefont{U}{dice3d}{m}{n}$\stackrel{\text{3a}}{}$ $\stackrel{\text{3a}}{}$}\end{gathered}\right)- P_n\left(\begin{gathered}\text{\tiny \usefont{U}{dice3d}{m}{n}$\stackrel{\text{3a}}{}$ 
$\stackrel{\text{3b}}{}$
}\end{gathered}\right)+P_n\left(\begin{gathered}\text{\tiny \usefont{U}{dice3d}{m}{n}$\stackrel{\text{3b}}{}$ 
$\stackrel{\text{3b}}{}$}\end{gathered}\right)$ for any $n\geq 2$  and  these bounds satisfy the inequalities
$$
-0.5=v_2 < v_3 < \dots < v_\infty=0.
$$
This means that as $n$ increases, $v_n$ approaches the value computed using the probability $P_{\infty}$ of an infinitely exchangeable sequence. 
 In Section \ref{sec:Bern}, we will discuss a general numerical algorithm to compute a lower bound of any linear function of event probabilities. 
 
 It is important to emphasize that the inequalities $v_2 < v_3 < \dots < v_\infty$ do not hold for every linear function 
of event probabilities. For example, if we consider the lower bound $w_{n}=\min_{P_n} \Big(P_n\left(\begin{gathered}\text{\tiny \usefont{U}{dice3d}{m}{n}$\stackrel{\text{3a}}{}$ $\stackrel{\text{3a}}{}$}\end{gathered}\right)+P_n\left(\begin{gathered}\text{\tiny \usefont{U}{dice3d}{m}{n}$\stackrel{\text{3b}}{}$ 
$\stackrel{\text{3b}}{}$}\end{gathered}\right)\Big)$ 
then $w_2=w_3=\dots=w_{\infty}=0$. The linear function of event probabilities in \eqref{eq:boundP2} acts as a ``witness''. Specifically, it can be used (by solving the minimisation problem) to distinguish the probability distribution of a finitely exchangeable sequence from that of an infinitely exchangeable one.
\end{example}

~\\
Why are these bounds important? First, statistical problems and experiments involve exchangeable sequences that are always finite. It is not possible to perform an infinite number of trials. Second,  de Finetti's theorem provides a useful and simple representation of the mechanism generating the data (which, in turn, can be inverted to estimate $\boldsymbol{\theta}$ from the counts $n_1,\dots,n_6$ via Bayes' rule).
Consequently, practitioners are still inclined to use de Finetti's theorem as an approximation for the mechanism generating the finite exchangeable sequence and use these bounds to assess the error of the approximation for a particular inference.\footnote{In the previous example, the inference is the computation of $P_n\left(\begin{gathered}\text{\tiny \usefont{U}{dice3d}{m}{n}$\stackrel{\text{3a}}{}$ $\stackrel{\text{3a}}{}$}\end{gathered}\right)- P_n\left(\begin{gathered}\text{\tiny \usefont{U}{dice3d}{m}{n}$\stackrel{\text{3a}}{}$ 
$\stackrel{\text{3b}}{}$
}\end{gathered}\right)+P_n\left(\begin{gathered}\text{\tiny \usefont{U}{dice3d}{m}{n}$\stackrel{\text{3b}}{}$ 
$\stackrel{\text{3b}}{}$}\end{gathered}\right)$.} These facts motivate the search for representation theorems that apply to both the finitely and the infinitely exchangeable case.

\section{First representation}
\label{sec:signed}
A fact about finite exchangeability that has not received equal attention is that a de Finetti-like representation actually holds in the finite case if we drop the non-negativity condition for $q$. The result was already known to de Finetti \cite{finetti1930}. Explicitly derived in \cite{dellacherie1982probabilities,jaynes1982some} for some special cases, it was generalised by \cite{kerns2006definetti}.

\begin{proposition}[\cite{kerns2006definetti}]
\label{prop:kerns}
Given a  sequence of finitely exchangeable variables $t_1,t_2,\dots,t_r$, there exists a signed measure  $\nu$, satisfying  $\int_{\Theta} d\nu(\boldsymbol{\theta})=1$, such that:
 \begin{equation}
\label{eq:reprThsigned0}
P(t_1,\dots,t_r) = \int_\Theta \theta_1^{n_1}\theta_2^{n_2}\cdots \theta_5^{n_5}\theta_6^{n_6} d\nu({\boldsymbol \theta}).
\end{equation}
\end{proposition}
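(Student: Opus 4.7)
The plan is to realise $\nu$ as a finite signed combination of Dirac masses on $\Theta$, obtained by solving a linear system whose dimensions match precisely because of exchangeability.

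First, I would reduce the problem using exchangeability. Since $P(t_1,\dots,t_r)$ is symmetric, it depends on the outcome only through the count vector $\mathbf{n}=(n_1,\dots,n_6)$ with $\sum_i n_i=r$; write $p(\mathbf{n})$ for this common value. Let $\mathcal{N}_r:=\{\mathbf{n}\in\mathbb{N}^6:\sum_i n_i=r\}$ and $N:=|\mathcal{N}_r|=\binom{r+5}{5}$. Crucially, $N$ is also the number of distinct monomials $\boldsymbol{\theta}^{\mathbf{n}}:=\theta_1^{n_1}\cdots\theta_6^{n_6}$ of total degree $r$, and (after using the simplex constraint $\theta_6=1-\sum_{i<6}\theta_i$ to parametrise $\Theta$) these monomials form a basis of the space of polynomials of degree at most $r$ in the five free coordinates. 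This coincidence of dimensions is the structural reason the representation should exist.

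Next, I would choose $N$ points $\boldsymbol{\theta}^{(1)},\dots,\boldsymbol{\theta}^{(N)}\in\Theta$ at which the $N\times N$ evaluation matrix $M_{\mathbf{n},j}:=(\boldsymbol{\theta}^{(j)})^{\mathbf{n}}$ is invertible; such points exist because $\det M$, regarded as a polynomial in the coordinates of the chosen points, is not identically zero (this is a reformulation of the linear independence of the $\boldsymbol{\theta}^{\mathbf{n}}$ noted above). Solve $Mw=p$ for $w\in\mathbb{R}^N$ and take $\nu:=\sum_{j=1}^{N}w_j\,\delta_{\boldsymbol{\theta}^{(j)}}$. Then for every $\mathbf{n}\in\mathcal{N}_r$,
\[
\int_\Theta \boldsymbol{\theta}^{\mathbf{n}}\,d\nu(\boldsymbol{\theta})=\sum_{j=1}^N w_j\,(\boldsymbol{\theta}^{(j)})^{\mathbf{n}}=p(\mathbf{n})=P(t_1,\dots,t_r),
\]
which is the claimed representation \eqref{eq:reprThsigned0}. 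Normalisation $\int d\nu=1$ then comes for free: the multinomial identity $\sum_{\mathbf{n}\in\mathcal{N}_r}\binom{r}{\mathbf{n}}\boldsymbol{\theta}^{\mathbf{n}}=(\theta_1+\cdots+\theta_6)^r=1$ on $\Theta$, integrated against $\nu$, yields $\int d\nu=\sum_{\mathbf{n}}\binom{r}{\mathbf{n}}p(\mathbf{n})=\sum_{t_1,\dots,t_r}P(t_1,\dots,t_r)=1$.

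The main obstacle is justifying the invertibility step, which reduces to showing that the degree-$r$ monomials $\boldsymbol{\theta}^{\mathbf{n}}$ restrict to linearly independent functions on $\Theta$. A clean way is to show they span all polynomials of degree $\leq r$ on $\Theta$: any lower-degree monomial $\boldsymbol{\theta}^{\mathbf{k}}$ with $|\mathbf{k}|=k\le r$ equals $\boldsymbol{\theta}^{\mathbf{k}}\cdot(\theta_1+\cdots+\theta_6)^{r-k}$ on $\Theta$, which expands as a combination of degree-$r$ monomials; combined with the dimension count $\binom{r+5}{5}$, spanning forces linear independence. This is essentially the classical Bernstein-basis fact on the simplex, and once it is in hand the rest of the argument is linear algebra.
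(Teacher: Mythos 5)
The paper does not actually prove Proposition~\ref{prop:kerns}; it imports it from Kerns and Sz\'ekely, so your argument has to stand on its own --- and it does. Your route is a ``signed quadrature'' construction: exchangeability collapses $P$ to a function $p(\mathbf{n})$ of the count vector, the $\binom{r+5}{5}$ degree-$r$ monomials $\boldsymbol{\theta}^{\mathbf{n}}$ restricted to the simplex form a basis of the polynomials of degree at most $r$ in the five free coordinates (your spanning-plus-dimension-count argument for this is the standard Bernstein-basis fact and is correct; note that linear independence as polynomials transfers to linear independence as functions on $\Theta$ because $\Theta$ has nonempty relative interior), so one can pick $N$ nodes with invertible evaluation matrix and solve for signed weights. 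The existence of such nodes from linear independence is the classical alternant argument (greedily extend a maximal nonsingular minor; a vanishing $(k{+}1)\times(k{+}1)$ determinant for all choices of the last node would give a nontrivial dependence), which you gesture at correctly, and the normalisation via $(\theta_1+\cdots+\theta_6)^r=1$ together with the multinomial count of sequences per count vector is right. This differs from the Kerns--Sz\'ekely proof, which goes through the extreme-point (urn/hypergeometric) decomposition of finitely exchangeable laws and then writes each urn measure as a signed mixture of i.i.d.\ laws; that route is longer but yields structural control (e.g.\ on the total variation of $\nu$) that feeds into the approximation bounds the paper discusses, whereas your interpolation argument is shorter but produces an essentially arbitrary atomic $\nu$ depending on the chosen nodes. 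Your proof is also close in spirit to the Bernstein-polynomial machinery the paper develops around Proposition~\ref{prop:thBern}.
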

Signed measure  means that $\nu$  can have  negative values.   Note that, in  Eq.~\eqref{eq:reprThsigned0}, the only valid  signed-measures $\nu$ are those which give rise to well-defined (non-negative and normalised) probabilities $P(t_1,t_2,\dots,t_r)$.  

By comparing the two representations \eqref{eq:reprTh0} and \eqref{eq:reprThsigned0}, it becomes apparent that there exists a significant difference that may not be immediately noticeable.

Any probability measure $q$ originates, via Eq.~\eqref{eq:reprTh0}, into a valid (non-negative and normalised) probability  $P(t_1,\dots,t_r)$, that is for any $[n_1,\dots,n_6]$ such that $\sum_{i=1}^6 n_i=r$, we have that:
{
\begin{align}
 \label{eq:sum1neg0}
 &\int_\Theta \theta_1^{n_1}\theta_2^{n_2}\cdots \theta_5^{n_5}\theta_6^{n_6} dq({\boldsymbol \theta}) \geq 0, ~\\
 \label{eq:sum1neg}
\sum_{[n'_1,\dots,n'_6]: \sum_{i=1}^6 n'_i=r} b_{n'_1n'_2\dots n'_6}&\int_\Theta \theta_1^{n'_1}\theta_2^{n'_2}\cdots \theta_5^{n'_5}\theta_6^{n'_6} dq({\boldsymbol \theta})=1,
\end{align}
where $b_{n_1n_2\dots n_6}=r!/(n_1!n_2!n_3!n_4!n_5!n_6!)$ is the multinomial coefficient. Note that, \eqref{eq:sum1neg0}--\eqref{eq:sum1neg} imply that $P(t_1,\dots,t_r)\geq0$ and $\sum_{t_1,\dots,t_r} P(t_1,\dots,t_r)=1$. Indeed, consider  $P(t_1,\dots,t_r)$$
$ and let  $n_i=\sum_{j=1}^r \delta_{i}(t_j)$  the number of occurrences of the i-th face in the sequence $t_1,\dots,t_r$.
From \eqref{eq:reprTh0}, we have that \\ $
P(t_{1},\dots,t_{r}) = \int_\Theta \theta_1^{n_1}\theta_2^{n_2}\cdots \theta_5^{n_5}\theta_6^{n_6} dq({\boldsymbol \theta})$ (which is nonnegative). This equality holds for any permutation $t_{\pi(1)},t_{\pi(2)},\dots,t_{\pi(r)}$, as they all have the same amount of counts $n_1,n_2,\dots,n_6$. There are $r!/(n_1!n_2!n_3!n_4!n_5!n_6!)$ of these permutations. Therefore, we have that $$\sum_{t_1,\dots,t_r}P(t_1,\dots,t_r)=\sum\limits_{{\bf n}= [n'_1,\dots,n'_6]: \sum_i n'_i=r} \hspace{-0.5mm}b_{n'_1n'_2\dots n'_6} 
\int_\Theta \theta_1^{n'_1}\theta_2^{n'_2}\cdots \theta_5^{n'_5}\theta_6^{n'_6} dq({\boldsymbol \theta})=1
$$ (normalised). 
} 
Notice that not all signed measures $\nu$ that integrate to one define valid probabilities  $P(t_1,\dots,t_r)$ via Eq.~\eqref{eq:reprThsigned0}. In some cases, these measures may result in a negative value for $P(t_1,\dots,t_r)$. This is why this representation is not used in practice. Similarly, this is why in QT we avoid using `negative probabilities', despite the well-known fact that such probabilities could enable a hidden variable representation, at least mathematically.

For finite exchangeability, 
Jaynes' work \cite{jaynes1982some} is interesting because it contains a  method to define valid $P$ via Eq.~\eqref{eq:reprThsigned0} using Bernstein (or Legendre) polynomials. A more recent approach is discussed in the next section.

\section{Second representation}
\label{sec:Bern}
Reasoning about finitely exchangeable variables $t_i$ can be reduced to reasoning about  polynomials of count vectors, known as Bernstein polynomials \cite{jaynes1982some,kerns2006definetti,harper2007probability,de2012exchangeability,zaffalon2019b}. Working with this polynomial representation automatically guarantees
that finite exchangeability is satisfied and that $P$ are valid probabilities, without having to deal directly with the signed measure $\nu$.

The \textit{multivariate Bernstein polynomials} are  polynomials of the form
\begin{equation}
    \label{eq:Berne}
\theta_1^{n_1}\theta_2^{n_2}\cdots \theta_5^{n_5}\left(1-\theta_1-\dots-\theta_5\right)^{n_6} \text{ with }\sum_{i=1}^6 n_i=r,
\end{equation}
where $r$ is the number of rolls. They satisfy many useful properties, such as forming a base for the linear space of all polynomials whose degree is at most  $r$ and forming a partition of unity. 
To simplify the notation, we will denote $1-\theta_1-\dots-\theta_5$ as $\theta_6$, so that $\sum_{i=1}^6 \theta_i=1$.

The fundamental concept behind the new representation theorem that  focuses on Bernstein polynomials involves working directly with the linear operator $L$ defined as
\begin{equation}
\label{eq:quasiexp}
    L(\dots) := \int_{\Theta} (\dots)\,d\nu(\boldsymbol{\theta}),
\end{equation}
and determining the specific properties that $L$ must adhere to in order to produce valid probabilities. %

More precisely, let $L$ %
be  a \textit{linear operator} from the space $\mathcal{B}_r$ of  polynomials
of  degree at most $r$ to $\mathbb{R}$. Our goal is to write any finitely exchangeable $P(t_1,t_2,\dots,t_r)$  as $L(\theta_1^{n_1}\theta_2^{n_2}\cdots \theta_6^{n_6})$, where $\sum_{i=1}^6 n_i=r$. In the aim of determining the appropriate properties of $L$, %
we first introduce the  closed convex cone %
\begin{equation}
 \label{eq:LPcone1}
\begin{aligned}
    \mathcal{B}^+_r =\Bigg\{ & \sum\limits_{{\bf n}= [n_1,\dots,n_6]: \sum_i n_i=r} \hspace{-3mm}u_{{\bf n}}\, 
  \theta_{1}^{n_1} \theta_{2}^{n_2}\cdots  \theta_6^{n_{6}}: u_{{\bf n}}\in \mathbb{R}^{+}  \Bigg\},
\end{aligned}
\end{equation}
which includes  all non-negative linear combinations of Bernstein's monomials of degree $r$, 
 as well as all constant polynomials with value $c\geq 0$. This latter property holds because %
{
\begin{equation}
    \label{eq:normalis}
\begin{aligned}&c=c(\theta_{1}+\theta_{2}+\cdots+\theta_6 )^{r}=c\sum\limits_{{\bf n}= [n_1,\dots,n_6]: \sum_i n_i=r} \hspace{-3mm}
b_{n_1n_2\dots n_6}\theta_{1}^{n_1} \theta_{2}^{n_2}\cdots  \theta_6^{n_{6}},
\end{aligned}
\end{equation}
and the coefficients $c\, b_{n_1n_2\dots n_6}$ are positive.}
  In the following, we will use $g$ to denote generic polynomials of the vector of variables $\boldsymbol{\theta}$ and $c$ to represent the constant polynomial with value $c$, meaning $g(\boldsymbol{\theta}) = c$ for all $\boldsymbol{\theta}$.\\

\begin{proposition}[\cite{zaffalon2019b}]
\label{prop:thBern}
Given a  sequence of finitely exchangeable variables $t_1,t_2,\dots,t_r$, there exists a linear operator $L: \mathcal{B}_r \rightarrow \mathbb{R}$  satisfying  
\begin{equation}
\label{eq:A}
     L(g)\geq \sup c \text{ s.t. } g-c \in \mathcal{B}_r^+,  
\end{equation}
for all polynomials $g \in \mathcal{B}_r$ and where $c$ is the constant polynomial of value $c$, such that:
 \begin{equation}
\label{eq:reprThsigned}
P(t_1,\dots,t_r) = L(\theta_1^{n_1}\theta_2^{n_2}\cdots \theta_5^{n_5}\theta_6^{n_6}),
\end{equation}
where $n_i=\sum_{j=1}^r \delta_{i}(t_j)$ denotes the number of occurrences of the i-th face in the sequence $t_1,\dots,t_r$.\\
\end{proposition}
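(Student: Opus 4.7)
The plan is to construct $L$ directly by prescribing its values on a basis of $\mathcal{B}_r$ and then to check the cone-dominance condition \eqref{eq:A} using only finite exchangeability together with the partition-of-unity property of Bernstein polynomials. No signed measure $\nu$ needs to be manipulated explicitly; the operator $L$ is built from the finitely exchangeable probability itself.

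First, I would invoke the standard fact that the multivariate Bernstein monomials $\{\theta_1^{n_1}\cdots \theta_6^{n_6}:n_i\geq 0,\ \sum_i n_i = r\}$ form a basis of $\mathcal{B}_r$, viewed as the space of polynomials of degree at most $r$ on the simplex $\Theta$. This uses the simplex identity $\sum_{i=1}^{6}\theta_i=1$ in an essential way: any monomial of total degree $k<r$ becomes a linear combination of degree-$r$ Bernstein monomials after multiplying by $(\theta_1+\dots+\theta_6)^{r-k}$. Equivalently, upon substituting $\theta_6 = 1-\theta_1-\dots-\theta_5$, the Bernstein monomials become a basis of the space of polynomials in five free variables of total degree at most $r$.

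Second, by finite exchangeability (Definition \ref{def:finexcg}), the probability $P(t_1=a_1,\dots,t_r=a_r)$ depends only on the count vector $\mathbf{n}=(n_1,\dots,n_6)$ of outcomes; call this common value $p_{\mathbf{n}}$. I would define $L$ on the basis by $L(\theta_1^{n_1}\cdots\theta_6^{n_6}) := p_{\mathbf{n}}$ and extend by linearity to all of $\mathcal{B}_r$. The representation \eqref{eq:reprThsigned} then holds by construction.

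Third, I would verify the dominance condition \eqref{eq:A}. A key intermediate step is $L(1)=1$: since $1=\sum_{\mathbf{n}:\sum_i n_i=r}\binom{r}{\mathbf{n}}\theta_1^{n_1}\cdots\theta_6^{n_6}$ on $\Theta$, linearity gives $L(1)=\sum_{\mathbf{n}}\binom{r}{\mathbf{n}}p_{\mathbf{n}}$, which is exactly the total probability of observing some count vector and therefore equals $1$. Now if $g-c\in\mathcal{B}_r^+$, write $g = c + \sum_{\mathbf{n}} u_{\mathbf{n}}\,\theta_1^{n_1}\cdots\theta_6^{n_6}$ with $u_{\mathbf{n}}\geq 0$; linearity combined with $p_{\mathbf{n}}\geq 0$ yields $L(g)=c\,L(1)+\sum_{\mathbf{n}} u_{\mathbf{n}} p_{\mathbf{n}}\geq c$, and taking the supremum over admissible $c$ gives \eqref{eq:A}.

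The main subtlety, and the step most open to error, is the basis claim in the first paragraph: it must be used with the correct interpretation of $\mathcal{B}_r$ as polynomials on the simplex (i.e.\ modulo the constraint $\sum_i\theta_i=1$), because on an unconstrained polynomial ring the degree-$r$ Bernstein monomials span only the homogeneous part. Once that is properly set up, the rest of the argument reduces to linearity and the non-negativity of exchangeable probabilities, with the positivity condition \eqref{eq:A} playing the role that normalisation and non-negativity play for ordinary probability measures.
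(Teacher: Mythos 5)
Your construction is correct. Note that the paper does not actually prove Proposition~\ref{prop:thBern}; it imports it from \cite{zaffalon2019b}, so there is no in-paper argument to compare against. Your route---defining $L$ on the Bernstein basis by $L(\theta_1^{n_1}\cdots\theta_6^{n_6}):=p_{\mathbf{n}}$, using linear independence of the degree-$r$ Bernstein monomials on the simplex for well-definedness, and deriving \eqref{eq:A} from $p_{\mathbf{n}}\geq 0$ together with $L(1)=\sum_{\mathbf{n}}\binom{r}{\mathbf{n}}p_{\mathbf{n}}=1$---is the standard argument underlying the cited result, and you correctly flag the one genuinely delicate point (that $\mathcal{B}_r$ must be read modulo $\sum_i\theta_i=1$ for the basis claim and for the uniqueness of the coefficients $u_{\mathbf{n}}$ in the cone membership test).
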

{ With the expression in Eq. \eqref{eq:A}, we mean that  $L(g)\geq \gamma$ where $\gamma=\{\sup c \mid g-c \in \mathcal{B}_r^+\}$.}
Note that, the supremum in \eqref{eq:A} is taken over the value $c$, which defines the constant polynomial $c$.
We call  a  linear operator $L$ satisfying \eqref{eq:A}
a \emph{quasi-expectation operator}. The general definition of \emph{quasi-expectation operator}, along with some of its main properties, are outlined in  \ref{app:qexp}. Proposition \ref{prop:thBern} states that that  any probability distribution $P$ of an exchangeable sequence $t_1,\dots,t_r$  can be defined as a quasi-expectation of a polynomial of the variables $\theta_1,\dots,\theta_6$. Since the polynomial $\theta_1^{n_1}\theta_2^{n_2}\cdots \theta_5^{n_5}\theta_6^{n_6}$ only depends on the sequence $t_1,\dots,t_r$ through the number of occurrences  $n_i$, the probability $P$ defined by \eqref{eq:reprThsigned} is by definition finitely exchangeable.
The condition \eqref{eq:A}  ensures that the quasi-expectation operator $L$ defines valid (non-negative and normalised) probabilities  $P$ via \eqref{eq:reprThsigned}. In fact, note that, for $g=c$ (the constant polynomial), the condition \eqref{eq:A} is equal to $L(c)\geq c$. Similarly for $g=-c$, we have that $L(-c)\geq -c$. Since $L$ is linear, we have that $L(-c)=-L(c)\geq -c$ and so $L(c)\leq c$. The two inequalities imply that $L(c)=c$.  For $c=1$, we have that $L(1)=1$. Hence, from \eqref{eq:normalis} for $c=1$ and  by linearity of $L$, we can derive that
{
 $$
\begin{aligned}
\sum\limits_{{\bf n}= [n_1,\dots,n_6]: \sum_i n_i=r} \hspace{-3mm}
b_{n_1n_2\dots n_6} L\left(\theta_{1}^{n_1} \theta_{2}^{n_2}\cdots  \theta_6^{n_{6}}\right)=L(1)=1.
\end{aligned}
$$
Consider now $P(t_1,\dots,t_r)$$
$ and let  $n_i=\sum_{j=1}^r \delta_{i}(t_j)$  the number of occurrences of the i-th face in the sequence $t_1,\dots,t_r$.
From \eqref{eq:reprThsigned}, we have  that $P(t_1,\dots,t_r) = L(\theta_1^{n_1}\theta_2^{n_2}\cdots \theta_5^{n_5}\theta_6^{n_6})$. This equality holds for any permutation $t_{\pi(1)},t_{\pi(2)},\dots,t_{\pi(r)}$ as they all have the same amount of counts $n_1,n_2,\dots,n_6$. There are $r!/(n_1!n_2!n_3!n_4!n_5!n_6!)$ of these permutations. Therefore, we have that
$$\sum_{t_1,\dots,t_r}P(t_1,\dots,t_r)=\sum\limits_{{\bf n}= [n_1,\dots,n_6]: \sum_i n_i=r} \hspace{-0.5mm}b_{n_1n_2\dots n_6} L\left(\theta_{1}^{n_1} \theta_{2}^{n_2}\cdots  \theta_6^{n_{6}}\right)=L(1)=1
$$ (normalised). 
}
It remains to show that $P(t_1,\dots,t_r)>0$ for every $t_1,\dots,t_r$. Note that, for $g(\boldsymbol{\theta})=\theta_1^{n_1}\theta_2^{n_2}\cdots \theta_5^{n_5}\theta_6^{n_6}$, the supremum value of $c$ such that the polynomial $g-c \in \mathcal{B}_r^+$ is $c=0$. From \eqref{eq:A}, it follows  that $L(g)\geq0$ (non-negative) for every $g(\boldsymbol{\theta})=\theta_1^{n_1}\theta_2^{n_2}\cdots \theta_5^{n_5}\theta_6^{n_6}$.\\

\begin{remark}
It is important to note that $L$ is defined on 
$\mathcal{B}_r$, which is the space of polynomials of degree up to $r$, but since Bernstein polynomials of degree $r$ form a basis for the space of polynomials of degree up to $r$, $L$ is completely determined once we specify its value 
$L(g)$ for every Bernstein polynomial $g$ of degree $r$.\\
\end{remark}

\begin{example}
\label{eq:LP0} Consider the case $r=2$ and a probability $P$ defined via a linear operator $L$ such that $L(\theta_1\theta_2)=0.5$ and $L(\theta_i\theta_j)=0$ for any other monomial of degree two in $\mathcal{B}_2$. It is easy to check that $L$ satisfies \eqref{eq:A} and, therefore, defines a non-negative and normalised finitely exchangeable $P$.  From \eqref{eq:reprThsigned}, it follows that:
$$
\begin{aligned}
P(t_1=1,t_2=1)&=L(\theta_1^2)=0,\\
P(t_1=1,t_2=2)&=P(t_1=2,t_2=1)=L(\theta_1\theta_2)=0.5,\\
P(t_1=2,t_2=2)&=L(\theta_2^2)=0,
\end{aligned}
$$
and zero otherwise. 
\end{example}

~\\
We will now use this representation through the quasi-expectation operator to determine bounds for linear functions of event probabilities.  In particular, in the next example, we will exploit \eqref{eq:A} to compute a lower bound for $L(g)$ for $g(\boldsymbol{\theta})=\theta_1^2-\theta_1\theta_2+\theta^2$.
~\\

\begin{example}
\label{ex:LP}
Let us consider again the problem discussed in Example \ref{eq:diffP}, with $r=2$. Assume we aim to find the finitely exchangeable probability $P_2$ which minimises 
{\small
$$
v_2=\min_{P_2} \Bigg( P_2\left(\begin{gathered}\text{\tiny \usefont{U}{dice3d}{m}{n}$\stackrel{\text{3a}}{}$ $\stackrel{\text{3a}}{}$}\end{gathered}\right)- P_2\left(\begin{gathered}\text{\tiny \usefont{U}{dice3d}{m}{n}$\stackrel{\text{3a}}{}$ 
 $\stackrel{\text{3b}}{}$
}\end{gathered}\right)+P_2\left(\begin{gathered}\text{\tiny \usefont{U}{dice3d}{m}{n}$\stackrel{\text{3b}}{}$ 
 $\stackrel{\text{3b}}{}$}\end{gathered}\right)\Bigg).
$$
}
Using the operator $L$, this quantity can be rewritten as
$$
\begin{aligned}
v_2&=\min_{L_2} L_2(\theta_1^2-\theta_1\theta_2+\theta_2^2)\\
&=\min_{L_2} \Big(L_2(\theta_1^2)-L_2(\theta_1\theta_2)+L_2(\theta_2^2)\Big)
\end{aligned}
$$
where  $L_2$ indicates  the linear operator corresponding to $P_2$.

By \eqref{eq:A}, knowing that $\theta_6:=1-\theta_1-\dots-\theta_5$, the value $v_2$ can be determined by solving the following optimisation problem:\footnote{The supremum can be attained in this case and, therefore, we have replaced $\sup$ with $\max$.}
\begin{equation}
\label{eq:optimdice}
\begin{aligned}
v_2=&\max\, c
~~~s.t.~~~\theta_1^2-\theta_1\theta_2+\theta_2^2-c \in \mathcal{B}^+_2,
\end{aligned}
\end{equation}
where
\begin{equation}
\label{eq:coneposex}
\begin{aligned}
&\mathcal{B}^+_2=\{u_{200000}\theta_1^2 +  u_{110000}\theta_1\theta_2+  u_{101000}\theta_1\theta_3 + \\
& u_{100100}\theta_1\theta_4 +  u_{100010}\theta_1\theta_5\\
&+  u_{100001}\theta_1(1-\theta_1-\dots-\theta_5)  \\
&+u_{020000}\theta_2^2 +  \dots+ u_{000011}\theta_5(1-\theta_1-\dots-\theta_5)\\
&+ u_{000002}(1-\theta_1-\dots-\theta_5)^2: u_{[n_1,n_2,\dots,n_6]}\geq0\}.
\end{aligned}
\end{equation}
For a particular $c$, the polynomial $\theta_1^2-\theta_1\theta_2+\theta_2^2-c$ belongs to $\mathcal{B}^+_2$ if we can find coefficients $u_{200000},u_{110000},\dots,u_{000002}\geq0$ such that 

\begin{equation}
\label{eq:constrex}
\begin{aligned}
&\theta_1^2-\theta_1\theta_2+\theta_2^2-c =u_{200000}\theta_1^2 +  u_{110000}\theta_1\theta_2\\
&+  u_{101000}\theta_1\theta_3 +  u_{100100}\theta_1\theta_4 +  u_{100010}\theta_1\theta_5\\
&+  u_{100001}\theta_1(1-\theta_1-\dots-\theta_5) \\
&+u_{020000}\theta_2^2 +  \dots+ u_{000011}\theta_5(1-\theta_1-\dots-\theta_5)\\
&+ u_{000002}(1-\theta_1-\dots-\theta_5)^2.
\end{aligned}
\end{equation}

Therefore, the solution of \eqref{eq:optimdice} can be computed
 by solving the following linear programming problem:
\begin{equation}
\label{eq:constrU}
\begin{aligned}
&\max_{c\in \mathbb{R},u_{[n_1,n_2,\dots,n_6]}\in \mathbb{R}^+} c  \\
&s.t.\\
& -u_{000002} + u_{100001} - u_{200000} + 1 =0\\
& -2u_{000002} + u_{010001} + u_{100001} - u_{110000} - 1=0\\
 &-u_{000002} + u_{010001} - u_{020000} + 1=0\\
 &- c - u_{000002}=0\\
 & -2u_{000002} -u_{101000}+u_{100001}+ u_{001001}=0\\
 &  ~~~\texttt{ additional 21 constraints  }
\end{aligned}
\end{equation}
where the equality constraints have been obtained by equating the coefficients of the monomials
in \eqref{eq:constrex}.
For instance, consider the  constant term in the r.h.s.\ of \eqref{eq:constrex}: that is $u_{000002}$. It must be equal to the constant term $-c$. Therefore, we have that
$$
-c-u_{000002}=0.
$$
Similarly, consider the coefficient of the monomial $\theta_1^2$ in the r.h.s.\ of  \eqref{eq:constrex}: this is $u_{200000}-u_{100001}+u_{000002}$. It must be equal to the coefficient of the monomial $\theta_1^2$ in $\theta_1^2-\theta_1\theta_2+\theta_2^2-c$, which is $1$, that is
$$
1-(u_{200000}-u_{100001}+u_{000002})=0.
$$
The last specified constraint in Eq.~\eqref{eq:constrU} is relative to the monomial $\theta_1\theta_3$: since $\theta_1\theta_3$ is not  in $\theta_1^2-\theta_1\theta_2+\theta_2^2-c$, its coefficients are equated to zero. In total, there are 26 constraints.

The solution of Eq.~\eqref{eq:optimdice} can be computed using any linear programming solver  and is
$$
v_2=-0.5.
$$
The problem 
{\small
$$
v_3=\min_{P_3} \Bigg(
P_3\left(\begin{gathered}\text{\tiny \usefont{U}{dice3d}{m}{n}$\stackrel{\text{3a}}{}$ $\stackrel{\text{3a}}{}$}\end{gathered}\right)- P_3\left(\begin{gathered}\text{\tiny \usefont{U}{dice3d}{m}{n}$\stackrel{\text{3a}}{}$ 
 $\stackrel{\text{3b}}{}$
}\end{gathered}\right)+P_3\left(\begin{gathered}\text{\tiny \usefont{U}{dice3d}{m}{n}$\stackrel{\text{3b}}{}$ 
 $\stackrel{\text{3b}}{}$}\end{gathered}\right)\bigg),
$$
}
\\ where $P_3(t_1=a,t_2=b)=\sum_{i=1}^6P _3(t_1=a,t_2=b,t_3=i)$ and $P_3$ is the probability distribution of the finitely exchangeable sequence $t_1,t_2,t_3$, can be also solved via linear programming. In this case, we only need to express the above marginalisation in terms of the $L_3$ operator, for instance:
\begin{equation}
\label{eq:partialtrex}
\begin{aligned}
&P_3\left(\begin{gathered}\text{\tiny \usefont{U}{dice3d}{m}{n}$\stackrel{\text{3a}}{}$ $\stackrel{\text{3a}}{}$}\end{gathered}\right)=L_3\big(\theta_1^2(\theta_1+\theta_2+\dots+\theta_6)\big)\\
&P_3\left(\begin{gathered}\text{\tiny \usefont{U}{dice3d}{m}{n}$\stackrel{\text{3a}}{}$ $\stackrel{\text{3b}}{}$}\end{gathered}\right)=L_3\big(\theta_1\theta_2(\theta_1+\theta_2+\dots+\theta_6)\big)\\
&P_3\left(\begin{gathered}\text{\tiny \usefont{U}{dice3d}{m}{n}$\stackrel{\text{3b}}{}$ $\stackrel{\text{3b}}{}$}\end{gathered}\right)=L_3\big(\theta_2^2(\theta_1+\theta_2+\dots+\theta_6)\big)
\end{aligned}
\end{equation}
\\ where we have exploited that $\theta_1+\theta_2+\dots+\theta_6=1$. We can thus compute marginals on the result of two rolls by using a linear operator which is only defined on polynomials of degree 3, an operation that, %
as clarified in the next section, corresponds to the operation  \textbf{partial trace} in QT.
The solution is $$
v_3\approx -0.167.
$$

Note that, in the infinitely exchangeable case, the operator $L_{\infty}(\dots) := \int_{\Theta} (\dots)\,d q(\boldsymbol{\theta})$    applied to the polynomial $\theta_1^2-\theta_1\theta_2+\theta_2^2$, satisfies 
$$
\begin{aligned}
v_\infty&=\inf_{L_{\infty}} L_{\infty}(\theta_1^2-\theta_1\theta_2+\theta_2^2)\\&=\inf_{L_{\infty}} L_{\infty}(\theta_1^2)-L_{\infty}(\theta_1\theta_2)+L_{\infty}(\theta_2^2)=0.  
\end{aligned}
$$
Observe that, $v_\infty=0$ because $\theta_1^2-\theta_1\theta_2+\theta_2^2$ is nonnegative and its infimum is $0$.

\begin{figure}[h]
    \centering
    \includegraphics[width=7cm]{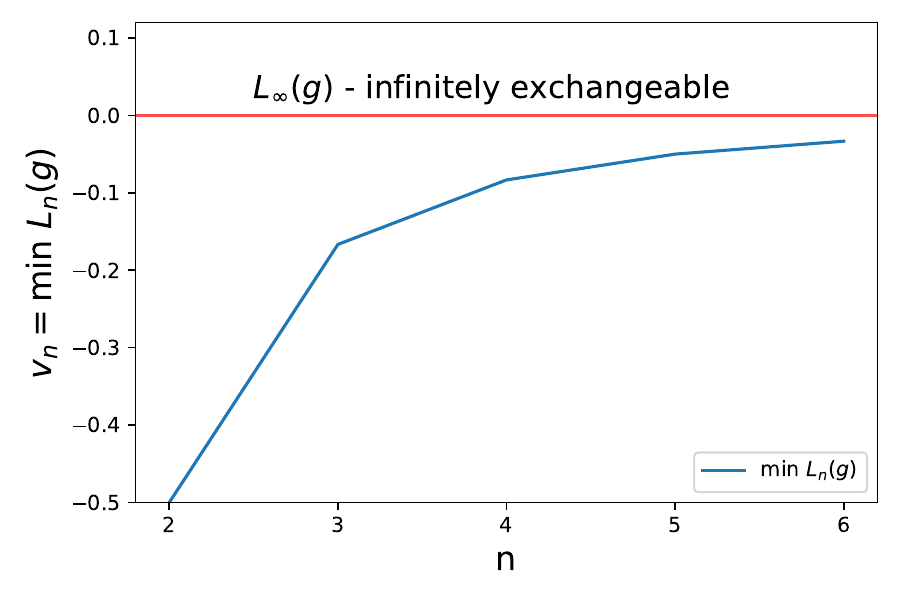}
    \caption{Value of $v_n$ as a function of $n$. The red line corresponds to $v_{\infty}$.}
    \label{fig:convLP}
\end{figure}

Figure \ref{fig:convLP} shows the value of $v_n=\min L_n(g)$ for $g(\boldsymbol{\theta})=\theta_1^2-\theta_1\theta_2+\theta_2^2$ and $n=2,\dots,6$,  where the subscript in $L_n$ indicates the linear operator corresponding to $P_n$, that is
{\small
$$
\begin{aligned}
v_n&=\min_{P_n} \Bigg( P_n\left(\begin{gathered}\text{\tiny \usefont{U}{dice3d}{m}{n}$\stackrel{\text{3a}}{}$ $\stackrel{\text{3a}}{}$}\end{gathered}\right)- P_n\left(\begin{gathered}\text{\tiny \usefont{U}{dice3d}{m}{n}$\stackrel{\text{3a}}{}$ 
 $\stackrel{\text{3b}}{}$
}\end{gathered}\right)+P_n\left(\begin{gathered}\text{\tiny \usefont{U}{dice3d}{m}{n}$\stackrel{\text{3b}}{}$ 
 $\stackrel{\text{3b}}{}$}\end{gathered}\right)\Bigg)
\end{aligned}
$$
}
\\ as a function of $n$. Notice that the gap between the value computed using de Finetti's theorem (red line) and the value computed for finitely exchangeable probabilities reduces as of the number of rolls $n$ increases. This implies that, as the total number of rolls increases, the worst-case marginal expectation for two rolls, which are part of a longer finitely exchangeable sequence of additional $0,1,2,\dots$ rolls, converges to the classical  probability (infinitely exchangeable) value. We can refer to this phenomenon as the `emergence of classical probability'. We will discuss it again  in the next Section.

\begin{figure}[h]
    \centering
    \includegraphics[width=9cm]{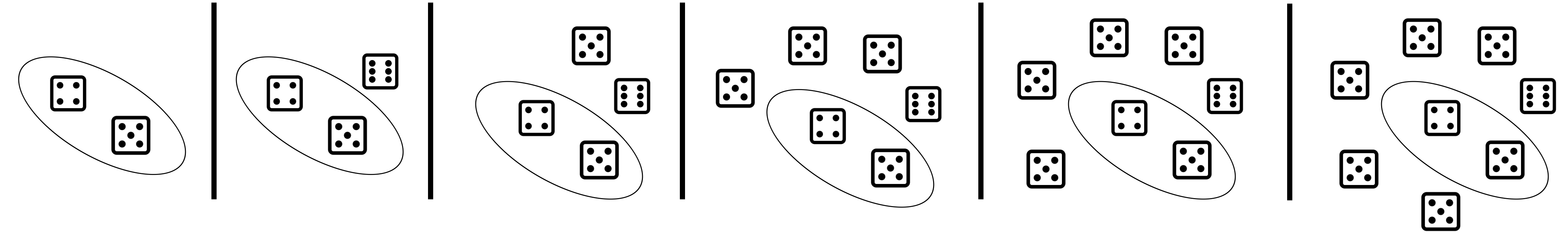}
    \caption{The worst-case marginal expectation for two rolls, which are part of a longer finitely exchangeable sequence of additional $0,1,2,\dots$ rolls, converges to the classical 
    probability value.}
    \label{fig:dicecircle}
\end{figure}
\end{example}

\section{Making the connection with QT}
\label{sec:Ent}
In this section, we will  strengthen the connection between 
finite exchangeability and  finite dimensional QT.

Let ${\bf x}$ be a complex vector in $\mathbb{C}^d$ such that ${\bf x}^\dagger {\bf x}=1$. We can express the probabilities of the six faces ($d=6$) of the dice as:
$$
{\boldsymbol \theta}=[\theta_1,\dots,\theta_6]=[x_1^\dagger x_1,\dots,x_6^\dagger x_6].
$$
where $^\dagger$ denotes conjugate transpose. 
Moreover, notice that we can represent the monomials of the Bernstein polynomials with a specific degree $r$ as $\otimes_r {\boldsymbol \theta}$, where $\otimes$ is the Kronecker product. It follows that any polynomial in $\mathcal{B}_r$ can equivalently be written as 
\begin{equation}
\label{eq:ginx}
   g({\bf \theta})=(\otimes_r {\bf x})^\dagger D (\otimes_r {\bf x}),
\end{equation}
{ for some  diagonal Hermitian matrix $D$}.\\
\begin{example}
\label{ex:thetax}
The polynomial $\theta_1^2-\theta_1\theta_2+\theta_2^2$ can be written as   
$$
\begin{aligned}
g({\bf x}^\dagger,{\bf x})&=(x_1^\dagger x_1)^2-(x_1^\dagger x_1)(x_2^\dagger x_2)+(x_2^\dagger x_2)^2 \\
&=(\otimes_2 {\bf x})^\dagger D (\otimes_2 {\bf x})
\end{aligned}
$$
where $D$ is a diagonal matrix {($\text{diag}(D)=[D_1,D_2,\dots,D_{36}]^\top$)} with all zero entries expect $D_{1}=1=D_{8}$ and $D_{2}=D_{7}=-0.5$. Note that, to emphasise that the polynomials we will consider in this section are function of a complex vector $\mathbf{x}$ and its conjugate-transpose ${\bf x}^\dagger$, we will denote them as $g({\bf x}^\dagger,{\bf x})$.\\
\end{example}
We now
introduce a new linear operator $\widehat{L}:\mathcal{S}_r \rightarrow \mathbb{R}$, where 
$$
\mathcal{S}_r:=\{(\otimes_r {\bf x})^\dagger G (\otimes_r {\bf x}): G \text{ is Hermitian}\},
$$
is  the vector space of  polynomials of degree $2r$ of the variables $x_i,x_i^\dagger$ for $i=1,\dots,d$. Note that, $\mathcal{S}_r$ includes the polynomials of the form \eqref{eq:ginx}.

From  linearity  of the trace $Tr(\cdot)$ and of $\widehat{L}$, we have: 
\begin{equation}
\begin{aligned}
   \widehat{L}(g)&=\widehat{L}((\otimes_r {\bf x})^\dagger D (\otimes_r {\bf x}))\\
   &= Tr\Big(D\widehat{L}\Big( (\otimes_r {\bf x})(\otimes_r {\bf x})^\dagger\Big)\Big).
   \end{aligned}
\end{equation}
Note that, $\widehat{L}\Big( (\otimes_r {\bf x})(\otimes_r {\bf x})^\dagger\Big)$ is a Hermitian  matrix of dimension $d^r \times d^r$. Example for $r=2,d=6$:
\begin{align}
\widehat{L}\Big( (\otimes_2 {\bf x})(\otimes_2 {\bf x})^\dagger\Big)=
&\widehat{L}\left(\begin{matrix}
(x_1^2)^\dagger x_1^2& (x_1^2)^\dagger x_1 x_2 & \dots &(x_1^2)^\dagger x_6^2\\
(x_1 x_2)^\dagger x_1^2 & (x_1 x_2)^\dagger x_1 x_2 & \dots &(x_1 x_2)^\dagger x_6^2\\
(x_1 x_3)^\dagger x_1^2 & (x_1 x_3)^\dagger x_1 x_2 & \dots &(x_1 x_3)^\dagger x_6^2\\
\vdots & \vdots & \vdots  & \vdots\\
(x_6^2)^\dagger x_1^2 & (x_6^2)^\dagger x_1 x_2 & \dots &(x_6^2)^\dagger x_6^2\\
\end{matrix}\right),
\end{align}
where $\widehat{L}$ is applied element-wise.  Observe that the above matrix has additional symmetries (besides being Hermitian), which follow by the symmetries of $(\otimes_r {\bf x})(\otimes_r {\bf x})^\dagger$. We can then prove the following representation theorem.\\

\begin{theorem}
\label{th:repr41}
Consider a linear operator $\widehat{L}: \mathcal{S}_r \rightarrow \mathbb{R}$ satisfying  
\begin{equation}
\label{eq:A1}
     \widehat{L}(g)\geq \sup c \text{ s.t. } g-c \in \mathcal{S}_r^+,  
\end{equation}
for all polynomials $g \in \mathcal{S}_r$, where $c$ is the constant polynomial of value\footnote{This can be obtained by choosing $H=cI$ where $I$ is the identity matrix.} $c$
and $\mathcal{S}_r^+=\{(\otimes_r {\bf x})^\dagger H (\otimes_r {\bf x}):\text{H is Hermitian and PSD}\}$ with PSD meaning Positive Semi-Definite.
We can then prove that
{
\begin{enumerate}
    \item If $\widehat{L}$ satisfies \eqref{eq:A1} for each polynomial $g\in \mathcal{S}_r$, then  the matrix  $M=\widehat{L}\Big( (\otimes_r {\bf x})(\otimes_r {\bf x})^\dagger\Big)$ is PSD with trace one. %
\item $P(t_1,\dots,t_r)=\widehat{L}((\otimes_r {\bf x})^\dagger D^{(t_{1:r})} (\otimes_r {\bf x}))=Tr(D^{(t_{1:r})} M)$ --- where $D^{(t_{1:r})}$ is a diagonal Hermitian matrix depending on $t_1,\dots,t_r$,\\ and $M=\widehat{L}\Big( (\otimes_r {\bf x})(\otimes_r {\bf x})^\dagger\Big)$ --- is a probability (nonnegative and normalised) and  is symmetric to permutations of the labels of the variables $t_i$.
\end{enumerate}}
\end{theorem}
We have constructed a quasi-expectation operator $\widehat{L}$ on a space of polynomials of complex variables, which is completely determined by trace-one PSD matrices $M$ (PSD with trace one and having  the same symmetries as the matrix $(\otimes_r {\bf x})(\otimes_r {\bf x})^\dagger$). Moreover, this operator, when applied to polynomials $(\otimes_r {\bf x})^\dagger D^{(t_{1:r})} (\otimes_r {\bf x})$, defines a probability $P$ which is symmetric to permutations of
the labels of the variables $t_i$. 
{
\begin{example}
Consider the case   $d=2$ (a coin) and $r=3$.
Then, for instance, we have that
$$
\begin{aligned}
p(t_1=1,t_2=1,t_3=2)&=(\otimes_3 {\bf x})^\dagger D^{(112)} (\otimes_3 {\bf x})=(x_1^\dagger x_1)^2 (x_2^\dagger x_2),\\
p(t_1=1,t_2=2,t_3=1)&=(\otimes_3 {\bf x})^\dagger D^{(121)} (\otimes_3 {\bf x})=(x_1^\dagger x_1)^2 (x_2^\dagger x_2),\\
p(t_1=2,t_2=1,t_3=1)&=(\otimes_3 {\bf x})^\dagger D^{(211)} (\otimes_3 {\bf x})=(x_1^\dagger x_1)^2 (x_2^\dagger x_2),\\
\end{aligned}
$$
with ${\bf x}=[x_1,x_2]^\top$ and $D^{(112)} =\text{diag}(0,1,0,0,0,0,0,0)$, \\$D^{(121)} =\text{diag}(0,0,1,0,0,0,0,0)$, $D^{(211)} =\text{diag}(0,0,0,0,1,0,0,0)$.
\end{example}
}

For the second statement in Theorem \ref{th:repr41}, note that $Tr(DM)=Tr(D\,\mathrm{diag}(M))$ holds for any diagonal matrix $D$. Hence Theorem \ref{th:repr41} uses some redundant quantities through $\widehat{L}\Big( (\otimes_r {\bf x})(\otimes_r {\bf x})^\dagger\Big)$. This withstanding, it enables another representation result that links finite exchangeability to QT, as discussed next.

First, we recall some useful definitions. Let $\mathcal{H}^{\otimes r}$ be the compound Hilbert space of $r$ particles. A fundamental postulate of QT for
bosons  is that the quantum states are symmetric  upon exchange of the subsystems, respectively. This postulate is formulated mathematically by introducing a projection operator  $\Pi_{sym}$ from
the tensor product space to this symmetric subspace.  The projector is called \textit{symmetriser} and is defined as   \cite[B-2-c]{cohen2020quantum}:
\begin{equation}\label{eq:symm}
 \Pi_{sym} := \frac{1}{N!} \sum_{i}\Pi_i,    
\end{equation}
where the summations are performed over the $N!$ permutation matrices, $\Pi_i$, associated to the permutations of the labels of the particles. Then the postulate states that a density matrix $\rho$ of $r$  particles is said to be \textit{Bose-symmetric} if:
$$
\rho = \Pi_{sym} \rho \Pi_{sym}.
$$
As briefly derived in  \ref{app:bosesymm}, this definition is equivalent to the more commonly employed one (for instance in \cite{hudson1976locally,christandl2007one}), namely, $
\rho = \Pi_{i} \rho = \rho \Pi_{i}^\dagger$ for every permutation matrix $\Pi_i$.
{
\begin{example}
Assume that $\text{dim}(\mathcal{H})=d=2$ and $r=2$. In this case, the symmetriser is  equal to:
$$
\Pi_{sym}=\begin{bmatrix}
1 & 0 & 0 &0\\
0 & 0.5 & 0.5 &0\\
0 & 0.5 & 0.5 &0\\
0 & 0 & 0 &1\\
\end{bmatrix}.
$$

It can  be verified that, for instance, the following density matrices
\begin{equation}
\label{eq:exrho}
\frac{1}{2}\begin{bmatrix}
1 & 0 & 0 &1\\
0 & 0 & 0 &0\\
0 & 0 & 0 &0\\
1 & 0 & 0 &1\\
\end{bmatrix},~~~~\frac{1}{2}\begin{bmatrix}
0 & 0 & 0 &0\\
0 & 1 & 1 &0\\
0 & 1 & 1 &0\\
0 & 0 & 0 &0\\
\end{bmatrix},
\end{equation}
are all Bose-symmetric, as they satisfy $\rho=\Pi_{sym} \rho \Pi_{sym}$. Let us write a general density matrix as:
\begin{equation}
\label{eq:genericrho}
\rho=\begin{bmatrix}
\rho_{11} & \rho_{12} & \rho_{13} & \rho_{14} \\[6pt]
\rho_{12}^\dagger & \rho_{22} & \rho_{23} & \rho_{24} \\[6pt]
\rho_{13}^\dagger & \rho_{23}^\dagger & \rho_{33} & \rho_{34} \\[6pt]
\rho_{14}^\dagger & \rho_{24}^\dagger & \rho_{34}^\dagger & \rho_{44}
\end{bmatrix},
\end{equation}
which must also be Hermitian with trace-one (the elements in the diagonal are nonnegative real numbers  and sum up to one), then
\begin{equation}
\label{eq:genericrhosym}
\Pi_{sym} \rho \Pi_{sym}=\left[
\begin{smallmatrix}
\rho_{11} & \tfrac{1}{2}(\rho_{12} + \rho_{13}) & \tfrac{1}{2}(\rho_{12} + \rho_{13}) & \rho_{14} \\[6pt]
\tfrac{1}{2}(\rho_{12}^\dagger + \rho_{13}^\dagger) &
\tfrac{\rho_{22}+\rho_{33} +\rho_{23} + \rho_{23}^\dagger}{4} &
\tfrac{\rho_{22}+\rho_{33} +\rho_{23} + \rho_{23}^\dagger}{4} &
\tfrac{1}{2}(\rho_{24} + \rho_{34}) \\[6pt]
\tfrac{1}{2}(\rho_{12}^\dagger + \rho_{13}^\dagger) &
\!\tfrac{\rho_{22}+\rho_{33} +\rho_{23} + \rho_{23}^\dagger}{4} &
\tfrac{\rho_{22}+\rho_{33} +\rho_{23} + \rho_{23}^\dagger}{4} &
\tfrac{1}{2}(\rho_{24} + \rho_{34}) \\[6pt]
\rho_{14}^\dagger &
\tfrac{1}{2}(\rho_{24}^\dagger + \rho_{34}^\dagger) &
\tfrac{1}{2}(\rho_{24}^\dagger + \rho_{34}^\dagger) &
\rho_{44}
\end{smallmatrix}\right].
\end{equation}
We can then define the conditions for a generic matrix $\rho$ to be Bose-symmetric by equating the elements of the two matrices in \eqref{eq:genericrho} and \eqref{eq:genericrhosym}.
For instance, for the element in the first-row and second-column, this leads to 
$\rho_{12}=\tfrac{1}{2}(\rho_{12} + \rho_{13})$, which implies that $\rho_{12}=\rho_{13}$. Similarly, we can derive that $\rho_{24}=\rho_{34}$. For the four central elements, we have that
$\rho_{22}=\rho_{23}=\rho_{33}=\tfrac{\rho_{22}+\rho_{33} +\rho_{23} + \rho_{23}^\dagger}{4}
$, which is a real number. Therefore, a generic $4 \times 4$ Bose-symmetric matrix must have these symmetries:
\begin{equation}
\label{eq:genericrhofin}
\rho=\begin{bmatrix}
\rho_{11} & \rho_{12} & \rho_{12} & \rho_{14} \\[6pt]
\rho_{12}^\dagger & \rho_{22} & \rho_{22} & \rho_{24} \\[6pt]
\rho_{12}^\dagger & \rho_{22} & \rho_{22} & \rho_{24} \\[6pt]
\rho_{14}^\dagger & \rho_{24}^\dagger & \rho_{24}^\dagger & \rho_{44}
\end{bmatrix},
\end{equation}
where $\rho_{ii} \in \mathbb{R}$ and  $\rho_{11}+2\rho_{22}+\rho_{44}=1$.
\end{example}
}
 We can in general  prove %
the following equivalence result.\\
\begin{theorem}
\label{th:2}
The following sets are equal:
\begin{align}
 \label{eq:states1}
\mathcal{S}_1&=\{\rho: ~\rho=\Pi_{sym}\rho\Pi_{sym}, ~\rho\succeq 0, ~Tr(\rho)=1\},\\
 \label{eq:states2}
\mathcal{S}_2&=\{M:~M=\widehat{L}( (\otimes_r {\bf x})(\otimes_r {\bf x})^\dagger)\},
\end{align}
where $\rho$ is the density matrix of $r$-indistinguishable bosons (each one being a $d$-level quantum system), and { $\rho \succeq 0$ means that the matrix $\rho$ is positive semi-definite}.\\
\end{theorem}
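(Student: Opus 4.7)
The plan is to prove set equality by establishing both inclusions, exploiting the key observation that $\otimes_r {\bf x} = {\bf x}^{\otimes r}$ is permutation-invariant by construction and hence already lies in the bosonic (symmetric) subspace of $\mathcal{H}^{\otimes r}$.

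For $\mathcal{S}_2 \subseteq \mathcal{S}_1$, Theorem~\ref{th:repr41} already guarantees that $M := \widehat{L}\bigl((\otimes_r {\bf x})(\otimes_r {\bf x})^\dagger\bigr)$ is PSD with unit trace, so only $\Pi M \Pi = M$ remains to be shown. Since any permutation operator $\pi$ on $\mathcal{H}^{\otimes r}$ fixes $\otimes_r {\bf x}$, averaging over $\pi$ gives
\[
\Pi (\otimes_r {\bf x})(\otimes_r {\bf x})^\dagger \Pi = (\otimes_r {\bf x})(\otimes_r {\bf x})^\dagger,
\]
and pushing the fixed numerical matrix $\Pi$ through the entrywise linear action of $\widehat{L}$ yields $\Pi M \Pi = M$, so $M \in \mathcal{S}_1$.

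For the reverse inclusion $\mathcal{S}_1 \subseteq \mathcal{S}_2$, given $\rho \in \mathcal{S}_1$ I would \emph{define} a candidate quasi-expectation by
\[
\widehat{L}(g) := Tr(G\rho), \qquad g = (\otimes_r {\bf x})^\dagger G (\otimes_r {\bf x}),
\]
and verify the three required properties. Condition \eqref{eq:A1} is immediate: if $g - c = (\otimes_r {\bf x})^\dagger H (\otimes_r {\bf x})$ with $H$ Hermitian PSD, then $\widehat{L}(g) - c = Tr(H\rho) \geq 0$ because the product of two PSD matrices has nonnegative trace. The recovery statement $\widehat{L}\bigl((\otimes_r {\bf x})(\otimes_r {\bf x})^\dagger\bigr) = \rho$ follows entrywise by linearity: the $(i,j)$ entry of the matrix of monomials corresponds to $G = e_j e_i^\dagger$, whose image under $\widehat{L}$ is $Tr(e_j e_i^\dagger \rho) = \rho_{ij}$.

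The delicate point, and the main obstacle, is the well-definedness of $\widehat{L}$: the same polynomial $g$ admits many Hermitian representatives $G$, because $\otimes_r {\bf x}$ probes $G$ only through its symmetric component $\Pi G \Pi$. A direct coefficient comparison in the commuting variables $x_i, x_i^\dagger$ shows that $g$ determines $G$ exactly up to $\Pi G \Pi$. This is precisely where the bosonic hypothesis on $\rho$ enters: since
\[
Tr(G\rho) = Tr(G \Pi \rho \Pi) = Tr(\Pi G \Pi\, \rho),
\]
the value $\widehat{L}(g)$ depends on $G$ only through $\Pi G \Pi$ and is therefore unambiguous. In summary, the two sets coincide because the permutation symmetries that make the $\mathcal{S}_2$-side consistent are exactly the bosonic symmetries defining $\mathcal{S}_1$.
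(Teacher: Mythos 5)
Your proof is correct, and the first inclusion $\mathcal{S}_2 \subseteq \mathcal{S}_1$ is exactly the paper's argument ($\Pi(\otimes_r{\bf x})=\otimes_r{\bf x}$, push $\Pi$ through $\widehat{L}$, and invoke Theorem~\ref{th:repr41} for positivity and trace). For the reverse inclusion, however, you take a genuinely different route. The paper proceeds combinatorially: it first proves a lemma stating that $\rho=\Pi\rho\Pi$ forces $\rho_{\boldsymbol{\alpha},\boldsymbol{\beta}}$ to be invariant under independent permutations of the row and column multi-indices, then re-indexes the entries by occupation numbers $[k_1,\dots,k_n]$ and observes that these occupation numbers are exactly the degree vectors labelling the monomials ${\bf x}^{\bf k}({\bf x}^\dagger)^{{\bf k}'}$ in $(\otimes_r{\bf x})(\otimes_r{\bf x})^\dagger$ --- a direct bijection of index sets, deliberately phrased to echo second quantisation. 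You instead run a duality-style argument: define $\widehat{L}(g):=Tr(G\rho)$, show well-definedness via the identity $Tr(G\rho)=Tr(\Pi G\Pi\,\rho)$ together with the observation that the polynomial $g$ determines $G$ precisely up to $\Pi G\Pi$, and recover $\rho$ entrywise with $G=e_je_i^\dagger$. The two arguments rest on the same underlying fact (both the entries of a bosonic $\rho$ and the monomials of $(\otimes_r{\bf x})(\otimes_r{\bf x})^\dagger$ depend only on occupation numbers), but your packaging is more self-contained and has the merit of explicitly verifying that the constructed $\widehat{L}$ satisfies condition \eqref{eq:A1}, a point the paper leaves implicit; what it gives up is the explicit link to the occupation-number representation that the paper wants to highlight. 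One small cosmetic caveat: $e_je_i^\dagger$ is not Hermitian for $i\neq j$, so you should note that $\widehat{L}$ extends from Hermitian $G$ to arbitrary $G$ by splitting into Hermitian and anti-Hermitian parts; this does not affect the validity of the argument.
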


This results states that  boson-symmetric density matrices and the set of matrices $M$ obtained by applying the operator $\widehat{L}$ to the matrix $(\otimes_r {\bf x})(\otimes_r {\bf x})^\dagger$ of $r$-th degree polynomials in the variables $x_i,x_i^\dagger$ for $i=1,\dots,d$ are equivalent. Eq.~\eqref{eq:states2} is obtained by applying a simple change of variables to the classical probability models for finitely exchangeable probabilities, and actually offers an alternative representation of the second-quantisation formalism in QT, which is of interest in its own right (as discussed in the proof of Theorem \ref{th:2} in  \ref{app:thm2}). \\
{
\begin{example}
For the case $\text{dim}(\mathcal{H})=d=2$ and $r=2$, this equivalence can be seen by comparing the matrix  in \eqref{eq:genericrhofin} with \begin{align}
\widehat{L}\Big( (\otimes_2 {\bf x})(\otimes_2 {\bf x})^\dagger\Big)=
&\widehat{L}\left(\begin{matrix}
(x_1^2)^\dagger x_1^2& (x_1 x_2)^\dagger x_1^2 & (x_1 x_2)^\dagger x_1^2 & (x_2^2)^\dagger x_1^2\\
(x_1^2)^\dagger x_1 x_2& (x_1 x_2)^\dagger x_1 x_2 & (x_1 x_2)^\dagger x_1 x_2 & (x_2^2)^\dagger x_1 x_2\\
(x_1^2)^\dagger x_1 x_2& (x_1 x_2)^\dagger x_1 x_2 & (x_1 x_2)^\dagger x_1 x_2 & (x_2^2)^\dagger x_1 x_2\\
(x_1^2)^\dagger x_2^2& (x_1 x_2)^\dagger x_2^2 & (x_1 x_2)^\dagger x_2^2 & (x_2^2)^\dagger x_2^2\\
\end{matrix}\right),
\end{align}
computed for ${\bf x}=[x_1,x_2]^\top$. The two matrices have  the same symmetries. It can also be noted that, in both matrices, the four central elements are real numbers and the diagonal elements sum up to one, as $(x_1^\dagger x_1)^2+2(x_1 x_2)^\dagger x_1 x_2 +(x_2^\dagger x_2)^2=\theta_1^2+2\theta_1\theta_2+\theta_2^2=(\theta_1+\theta_2)^2=1$ and $\widehat{L}(1)=1$.
\end{example}
From Theorems \ref{th:repr41}--\ref{th:2}, the following result follows directly.
 }

\begin{corollary}
 The operator defined by  $\widehat{L}(g):=Tr(G M)$ for all $g \in \mathcal{S}_r$, with $M$ being (i) PSD; (ii) trace one; and (iii) having the same symmetries as $(\otimes_r {\bf x})(\otimes_r {\bf x})^\dagger$; satisfies \eqref{eq:A1}  for each $g \in \mathcal{S}_r$.   
\end{corollary}
\begin{remark}~\vspace{-0.2cm}\newline
\begin{itemize}
    \item In \cite{benavoli2021indi,BENAVOLI2022389}, we derived a representation for indistinguishable particles for bosons using exchangeability and the symmetries $\rho=\Pi_{sym}\rho\Pi_{sym}$, but not in the same way as the de Finetti's representation. In the de Finetti's representation of the probability distribution for an exchangeable sequence, the only relevant information about the sequences is the number of occurrences of an event (such as the number of rolls $n_i$ of the 
i-th face). This naturally leads to a sort of second-order quantisation representation, where the variables (e.g., \( \theta_1, \dots, \theta_6 \)) appear exponentiated (e.g., \( \theta_1^{n_1}, \dots, \theta_6^{n_6} \)) in the polynomial term, 
such as in  Eq.\ \eqref{eq:reprTh0}. In this paper, we have shown that we can provide a similar representation (with exponentiated variables $\mathbf{x},\mathbf{x}^\dagger$) for QT (bosons), which is equivalent to  $\rho=\Pi_{sym}\rho\Pi_{sym}$, as proven in Theorem \ref{th:2}.\\
    \item An approach based on a similar exponentiated polynomial representation was proposed in \cite{doherty2004complete} to define a criterion to determine if a density matrix of \textit{distinguishable} particles is entangled or not. This approach is known as the Doherty-Parrilo-Spedalieri (DPS) hierarchy. For a bipartite density matrix $\rho^{AB}$, the level-$k$ DPS relaxation determines whether there exists an extension $\tilde{\rho}^{A_1 \dots A_k B_1 \dots B_k}$ that satisfies the following conditions: (i) it is invariant under left or right multiplication by any permutation of the $A$ or $B$ subsystems, and (ii) it remains positive semidefinite under partial transposition of any subset of the systems. The latter condition is commonly referred to as \text{Positivity under Partial Transpose (PPT)}. Such a level-$k$ extension exists if and only if $\rho^{AB}$ is not entangled  \cite[Th.~1]{doherty2004complete}. Notably, finding such an extension $\tilde{\rho}$ reduces to solving $k$ semidefinite programming problems or, dually, $k$ sum-of-squares optimisation problems \cite{lasserre2009moments}.
Their main result \cite[Th.~1]{doherty2004complete} is proven using the quantum de Finetti theorem.  Unlike this method, which serves as a mathematical optimisation tool to determine whether a density matrix is not entangled, our main results (Theorem \ref{th:repr41} and \ref{th:2}) provides a representation theorem for boson-symmetric density matrices. However,  the condition \eqref{eq:A1} that defines  valid linear operators $\widehat{L}$ could also  be used to define a hierarchy of optimisation problems to verify if the density matrix of $n$ indistinguishable particles is not entangled, using \cite[Th.~1]{doherty2004complete}. 
\item Theorem \ref{th:2} demonstrates that any finite system of exchangeable events is equivalent to a quantum experiment involving indistinguishable bosons measured using projective measurements along the coordinate axes. This implies that we can build a  quantum experiment whose probability of the measurement outcomes coincides with the probability of a finitely exchangeable sequence of dice rolling.  This means that quantum experiments provide an alternative way to the urn-model  discussed in Example \ref{ex:urn}  to realise a finitely exchangeable probability.
To illustrate this observation, we provide an example.
    \end{itemize}
\end{remark}

\begin{example}
\label{ex:last}
We can compute   $v_n=\min_{P_n} P_n\left(\begin{gathered}\text{\tiny \usefont{U}{dice3d}{m}{n}$\stackrel{\text{3a}}{}$ $\stackrel{\text{3a}}{}$}\end{gathered}\right)- P_n\left(\begin{gathered}\text{\tiny \usefont{U}{dice3d}{m}{n}$\stackrel{\text{3a}}{}$ 
 $\stackrel{\text{3b}}{}$
}\end{gathered}\right)+P_n\left(\begin{gathered}\text{\tiny \usefont{U}{dice3d}{m}{n}$\stackrel{\text{3b}}{}$ 
 $\stackrel{\text{3b}}{}$}\end{gathered}\right)$
 using $n$ indistinguishable bosons.  Let $\rho^{(n)}$ denote the density matrix of $n$ 6-level bosons, and $\Pi_{sym}^{(n)}$ be the symmetriser for the $n$ bosons. {Note that, we consider 6-level bosons because we are focusing on a dice rolling experiment. Here $n$ corresponds to the number of rolls ($n=r$).} 
{ From Theorems \ref{th:repr41}--\ref{th:2}, we can rewrite $v_n$ as 
 $v_n=\min_{\rho^{(n)}} Tr(D^{[n]}\rho^{(n)})$ with $\rho^{(n)}=\Pi_{sym}^{(n)} \rho^{(n)} \Pi_{sym}^{(n)}$, $Tr(\rho^{(n)})=1$ and $\rho^{(n)} \succeq 0$. 
All the superscripts are just indexes. Note that,  the superscript in $D^{[n]}$ denotes the number of bosons (the number of rolls in the dice experiment). For instance, for $n=2$, 
$D^{[2]}$ can be written as an algebraic sum of the matrices $D^{(t_1t_2)}$. In particular, we have that
$D^{[2]}=D^{(t_1=1,t_2=1)}-0.5D^{(t_1=1,t_2=2)}-0.5D^{(t_1=2,t_2=1)}+D^{(t_2=2,t_2=2)}$ and it is a diagonal matrix with all zero entries except $D^{[2]}_{1}=1=D^{[2]}_{8}$ and $D^{[2]}_{2}=D^{[2]}_{7}=-0.5$ ($D^{[2]}$ is a physical-observable operator, as it is symmetric to permutations). } This holds because the diagonal of $\rho^{(2)}$ corresponds to the monomials
$[\theta_1^2,\theta_1\theta_2,\dots,\theta_2\theta_1,\theta_2^2,\dots,\theta_6^2]$. Any $D^{[n]}$ (for $n>2$) can be defined in a similar way exploiting the  `partial trace'-style operations illustrated in Eq.~\eqref{eq:partialtrex}.

The value of $v_n$ is shown in Figure \ref{fig:convQP} as a function of $n$. Note that, in this case, in order to compute  $v_n=\min_{\rho^{(n)}} Tr(D^{[n]}\rho^{(n)})$, we would in principle need to solve a semidefinite programming problem due to the constraint $\rho^{(n)}\succeq0$ (it must be a PSD matrix). However, since $D^{[n]}$ is diagonal, this problem reduces to a linear programming problem.

\begin{figure}[h]
    \centering
    \includegraphics[width=7cm]{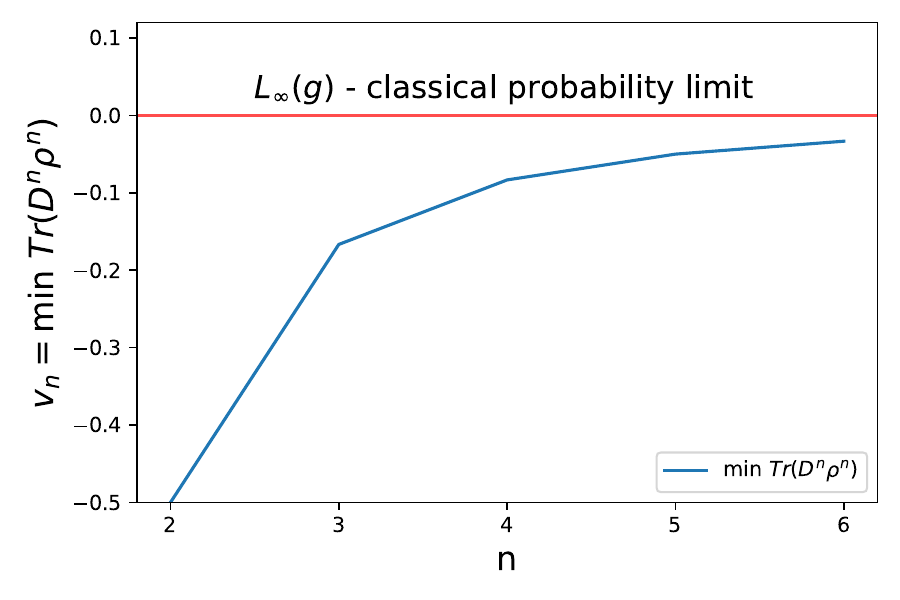}
    \caption{Value of $v_n$ as a function of $n$. The red line corresponds to $v_{\infty}$.}
    \label{fig:convQP}
\end{figure}

The values of $v_n$ in Figure \ref{fig:convQP} coincide exactly with the values in Figure \ref{fig:convLP}.

This means that the worst-case marginal expectation for an operator $D^{[n]}$ on two bosons, which are part of a longer sequence of additional 0, 1, 2,\dots bosons, converges to the
classical probability value as the total number of bosons increases, see Figure \ref{fig:dicebosons}. In QT, we would refer to this phenomenon as the `emergence of classical reality'. In probability theory, this is the convergence from finite to infinite exchangeability.\\

\begin{figure}[h]
    \centering
    \includegraphics[width=8cm]{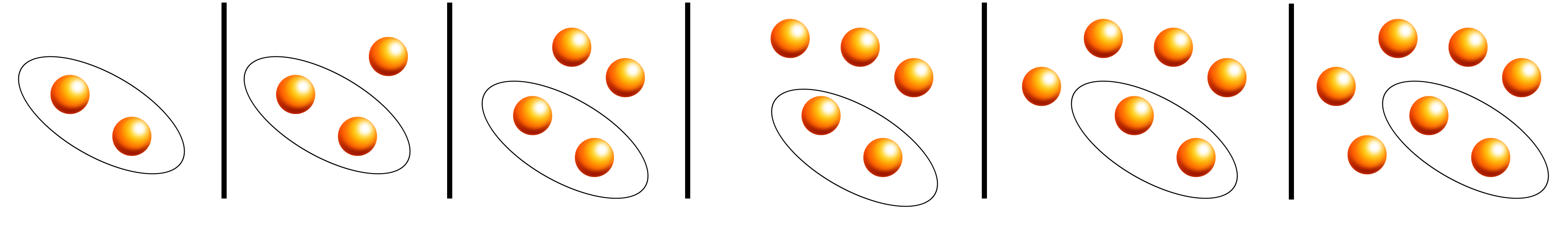}
    \caption{The worst-case marginal expectation for two bosons, which are part of a longer  indistinguishable sequence of additional $0,1,2,\dots$ bosons, converges to the classical 
    probability value.}
    \label{fig:dicebosons}
\end{figure}
\end{example}
The example above illustrates the following connection between finite exchangeability and entanglement. To understand it, for simplicity, we consider the case when $r=d=2$, which corresponds to a coin tossing experiment with two coins. 
By  Theorem \ref{th:2}, the finitely exchangeable sequence $t_1,t_2$ with respect to the probability $P_2(t_1=H,t_2=T)=P_2(t_1=T,t_2=H)=0.5$ (H=Heads, T=Tails) can be represented   as
\begin{equation}
\label{eq:rhosymm}
\begin{array}{l}
\rho=\begin{bmatrix}
0 & 0 & 0 & 0\\
0 & 0.5 & 0.5 & 0\\
0 & 0.5 & 0.5 & 0\\
0 & 0 & 0 & 0\\
\end{bmatrix}\vspace{2mm}\\
=M=\widehat{L}
\resizebox{2.5in}{!}{$\left( \begin{matrix}
(x_1^2)^\dagger x_1^2& (x_1^2)^\dagger x_1 x_2 & (x_1^2)^\dagger x_1 x_2 & (x_1^2)^\dagger x_2 x_2\\
(x_1x_2)^\dagger x_1^2& (x_1x_2)^\dagger x_1 x_2 & (x_1x_2)^\dagger x_1 x_2 & (x_1x_2)^\dagger x_2 x_2\\
(x_1x_2)^\dagger x_1^2& (x_1x_2)^\dagger x_1 x_2 & (x_1x_2)^\dagger x_1 x_2 & (x_1x_2)^\dagger x_2 x_2\\
(x_2^2)^\dagger x_1^2& (x_2^2)^\dagger x_1 x_2 & (x_2^2)^\dagger x_1 x_2 & (x_2^2)^\dagger x_2 x_2\\
\end{matrix}\right)$}
\end{array}   
\end{equation}
where $\widehat{L}((x_1x_2)^\dagger x_1 x_2)=0.5$ and zero otherwise.

The diagonal elements of $\rho$ include the probabilities $\rho_{11}=P_2(t_1=H,t_2=H)=0$, $\rho_{22}=P_2(t_1=H,t_2=T)=0.5$,  $\rho_{33}=P_2(t_1=T,t_2=H)=0.5$  and $\rho_{44}=P_2(t_1=T,t_2=T)=0$.
The symmetry condition $\rho = \Pi_{sym}\rho \Pi_{sym}$ implies that the off-diagonal elements $\rho_{23},\rho_{32}$ are equal to $0.5$. 
In fact, when considering the matrix defined by $\widehat{L}$, the four elements $M_{22},M_{23},M_{32},M_{33}$ are equal: by definition, the operator $\widehat{L}$ applies to the same monomial $x_1^\dagger x_2^\dagger x_1 x_2$.

In Example \ref{ex:last}, we then chose a Hermitian operator $D^{[2]}$, which for the two coins case reduces to  $D^{[2]}=\text{diag}([1,-0.5,-0.5,1])$, and results in the value $Tr(D^{[2]}\rho)=-0.5$. This value is the same value obtained by solving  $v_2=\min_{P_2} P_2(H,H)-P_2(H,T)+P_2(T,T)=0-0.5+0=-0.5$.

In Example \ref{ex:thetax}, we  observed that the coefficients in the diagonal of $D^{[2]}$ give rise to the polynomial $\theta_1^2-\theta_1\theta_2+\theta_2^2$. This polynomial is always nonnegative and, therefore, can serve as a \textit{finite exchangeability witness}.
In fact, we used the inequality $-0.5=v_2<v_{\infty}=0$ to prove that $P_2$ is finitely exchangeable.
The reader may thus ask: is $D^{[2]}$ also an entanglement witness for $\rho$, as defined in Eq.~\eqref{eq:rhosymm}?\footnote{{In QT, in the case of distinguishable particles,  witnesses are commonly \citep{horodecki2009quantum} used to differentiate entangled density matrices from separable (non-entangled) ones, which are compatible with classical probability theory.}}
Several nonequivalent ways of identification and quantification of the entanglement for bosons have been  proposed, see \cite{benatti2020entanglement} for a recent review. 
One of the main notions of entanglement for bosons states that non-entangled states are equivalent to \textit{simple symmetric tensors} \cite{pavskauskas2001quantum,eckert2002quantum,wang2005canonical,zhou2009correlation,kus2009classical,kotowski2010universal,grabowski2012segre,grabowski2011entanglement,sawicki2012critical,hyllus2012entanglement,oszmaniec2013universal,wasak2014cauchy,oszmaniec2014fraction,sawicki2014convexity,rigolin2016entanglement,garcia2017buildup,karczewski2019sculpting,morris2020entanglement,kunkel2018spatially}. Following \cite{benatti2020entanglement}, we refer to this definition of entanglement for identical particles as \textit{Entanglement-I}. In the case of two bosons, this definition implies that a pure state is non-entangled if and only
if it is a tensor product of two identical one-particle states \cite{grabowski2011entanglement}.
Notice that this definition is consistent with the model introduced through the operator $ \widehat{L}$ in Theorem \ref{th:repr41}. { Indeed, a pure\footnote{{A pure (or atomic) probability measure, corresponding to a non-entangled case, is one that assigns a precise probability distribution concentrated on a single outcome ${\bf x}$. In other words, it represents complete certainty about ${\bf x}$, as the probability measure cannot be decomposed into a convex combination of other measures.
}} non-entangled density matrix is simply the matrix $M=(\otimes_r \hat{\bf x})(\otimes_r \hat{\bf x})^\dagger$ for some $ \hat{{\bf x}} \in \mathbb{C}^d$ with $\hat{{\bf x}}^\dagger \hat{{\bf x}}=1$, corresponding to the product of identical one-particle states (each state corresponding to $\hat{{\bf x}}$).  In the dice experiment, this means that we have complete knowledge of the probabilities of each face of the dice.} Therefore,  $D^{[2]}=\text{diag}([1,-0.5,-0.5,1])$ is a witness for this definition of entanglement. Indeed, observe that the polynomial $(\otimes_2 {\bf x})^\dagger D^{[2]} (\otimes_2 {\bf x})$ is nonnegative. Since $Tr(D^{[2]}\rho)=-0.5<0$, this means that $\rho$ in Eq.~\eqref{eq:rhosymm} is entangled. 
{ 
\begin{remark}
To clarify this point, consider the polynomial 
$$
(\otimes_2 {\bf x})^\dagger D^{[2]} (\otimes_2 {\bf x})=\theta_1^2-\theta_1\theta_2+\theta_2^2.
$$
This polynomial is nonnegative for every probabilities $[\theta_1,\theta_2]$ we choose for the faces of the coin.
According to the \textit{Entanglement-I} definition, the only pure non-entangled states are those where $[\theta_1,\theta_2]$ are fully known. In other words, in this pure case,  an operator $\widehat{L}$ is not entangled if $\widehat{L}\left((\otimes_2 {\bf x})(\otimes_r {\bf x})^\dagger\right)=(\otimes_2 \hat{{\bf x}})(\otimes_2 \hat{\bf x})^\dagger$ for some $ \hat{{\bf x}} \in \mathbb{C}^2$ with $\hat{{\bf x}}^\dagger \hat{{\bf x}}=1$. A pure state $\hat{\mathbf{x}}$ attains the infimum of $(\otimes_2 {\bf x})^\dagger D^{[2]} (\otimes_2 {\bf x})$, since any mixture of pure states yields a higher value. 
Therefore, since (i) $(\otimes_2 {\bf x})^\dagger D^{[2]} (\otimes_2 {\bf x})\geq 0$; (ii) the operator $\widehat{L}$ defined as $\widehat{L}\left((\otimes_2 {\bf x})(\otimes_2 {\bf x})^\dagger\right)=\rho$, with $\rho$ in Eq.~\eqref{eq:rhosymm}, is such that 
$\widehat{L}\left((\otimes_2 {\bf x})^\dagger D^{[2]} (\otimes_2 {\bf x})\right)<0$; this means that this operator $\widehat{L}$ must be \textit{Entangled-I} (it cannot be written as $\rho=\sum_{i}w_i(\otimes_2 \hat{{\bf x}}_i)(\otimes_2 \hat{\bf x}_i)^\dagger$ for any probabilistic mixture of  $\hat{{\bf x}}_i$).
\end{remark}}
Entanglement-I has been challenged as a definition of entanglement for indistinguishable particles \cite{benatti2020entanglement}. We will discuss these arguments in the next Section \ref{sec:application}.

\section{Application}
\label{sec:application}
 For indistinguishable particles (IPs) systems, many (of the standard equivalent) notions of entanglement are ambiguous.
The reason is that  the symmetries resulting from  indistinguishability have mathematically the same structure as a conventional superposition. Therefore, symmetries generate a sort of \textit{structural entanglement}, which is simply determined by the particles fundamental indistinguishability. { For instance, the density matrices in~\eqref{eq:exrho} are Bose-symmetric, and therefore valid density matrices for indistinguishable bosons. Their structure is constrained by this symmetry. However, in the case of distinguishable particles, both matrices correspond to density matrices of two entangled particles. This creates an ambiguity between the symmetries arising from indistinguishability and the structural features that give rise to entanglement in the distinguishable case.
 }
This has given rise to different viewpoints about physical meaning and assessment of entanglement for IPs. There at least five main different definitions of entanglement for IPs in literature -- we refer the reader to \cite{benatti2020entanglement} for an overview, and a survey of the literature on the argument.

 Recently, \cite{benatti2020entanglement} attempted to systematise these different definitions by proposing a set of properties that should be used to characterise entanglement for IPs. These properties have been defined by trying to generalise properties that hold for entanglement of distinguishable particles.

The proposed properties are: (i) \textit{local operators:} entanglement for IPs must correspond to the presence of non-local correlations; (ii) \textit{effective distinguishability:} the entanglement theory developed for
IPs must reduce to the one for distinguishable particles when applied to IPs that have effectively become distinguishable by ``freezing''
suitable degrees of freedom; (iii) \textit{information processing resources:}  local operators acting on separable states must not enhance the
performances of informational tasks with respect to classical ones.
It is important to note that \cite{benatti2020entanglement} does not relate  entanglement to the violation of Bell-type inequalities ({ that is, through Dutch-book arguments}).  

In \cite{benatti2020entanglement}, it is claimed that `Entanglement-I' violates the properties \textit{local operators} and \textit{effective distinguishability}. We  will use the analogy between finite exchangeability and quantum theory developed in this paper to highlight some issues in the definitions of \textit{local operators} used in \cite{benatti2020entanglement}, and explain the argument about \textit{effective distinguishability}. In particular, we will show that the definitions given in \cite{benatti2020entanglement} lead to contradictions when applied (through our connection between QT for bosons and classical finite exchangeability) to classical finitely exchangeable probabilities. We then propose a modification of these properties, which avoids this issue.

\subsection{Local operators}
\label{sec:locop}
We start first with the argument about \textit{local operators} \cite[Sec.\ 3.1.1]{benatti2020entanglement}. 
Consider two single-particle operators  ({ Hermitian matrices}) $O_1$ for particle $1$ and $O_2$ for particle $2$. We know that
$$
O_1 \otimes O_2 = (O_1 \otimes I_2)(I_1 \otimes O_2), 
$$
where $I_i$ is the identity-matrix for the i-th particle. For distinguishable particles, it is well known that a bipartite density
matrix $\rho_{12}$ is separable if and only if, for all $O_1,O_2$:
\begin{equation}
Tr((O_1 \otimes O_2)\,\rho_{12})=\sum_{j=1}^m p_j Tr(O_1  \rho_j^{(1)}) Tr(O_2  \rho_j^{(2)}),
\end{equation}
for some probabilities $p_j$, $\sum_{j=1}^m p_j=1$, and single-particle density matrices $\rho_j^{(1)}$ and, respectively $\rho_j^{(2)}$. For a pure density matrix $\rho_{12}$, the above equation implies that:
\begin{equation}
\label{eq:localoperclass}
\begin{aligned}
  &Tr((O_1 \otimes O_2) \rho_{12})-Tr((O_1 \otimes I_2)\, \rho_{12})Tr((I_1 \otimes O_2)\, \rho_{12})=0. 
\end{aligned}
\end{equation}
The extension of this definition to IPs is quite challenging because it requires that each particle is individually
addressable. For IPs, $O_1 \otimes O_2$, $O_1 \otimes I_2$ and $I_1 \otimes O_2$ are not physical observable in general, because they are not symmetric.

To account for this issue,   \cite[Sec.\ 3.1.1]{benatti2020entanglement} considers  two commuting operators $O_1,O_2$ and define their symmetrised versions $\mathcal{P}(O_1,I),\mathcal{P}(I,O_2)$, as follows:
\begin{align}
\label{eq:symmOper}
\mathcal{P}(O,I)=\mathcal{P}(I,O)=O \otimes I + I \otimes O.
\end{align}
Commutativity of $O_1,O_2$ implies that  they  have a common eigenbasis $\{\ket{e_\lambda}\}_\lambda$ corresponding to eigenvalues $\{o^{(1)}_\lambda\}_\lambda$ for $O_1$, and $\{o^{(2)}_\lambda\}_\lambda$ for $O_2$. Now consider a
non-entangled-I state of the form $\ket{\psi}\otimes\ket{\psi}$ with
\begin{equation}
\ket{\psi}=\frac{\ket{e_\lambda}+\ket{e_\kappa}}{\sqrt{2}},
\end{equation}
and $\lambda\neq\kappa$, { where $\ket{\psi}$ denotes a column-vector in $ \mathbb{C}^2$,  $\bra{\psi}$ denotes its transpose; each vector $\ket{\psi}$  has unit-norm, that is $\braket{\psi}=1$.}

Then, the analogous (according to the definition in \cite{benatti2020entanglement}) of \eqref{eq:localoperclass} for a non-entangled-I state becomes:
\begin{align}
& \langle\psi\otimes\psi|\mathcal{P}(O_1,I)\mathcal{P}(O_2,I)\ket{\psi\otimes\psi}- \nonumber \\
& \langle\psi\otimes\psi|\mathcal{P}(O_1,I)\ket{\psi\otimes\psi}\langle\psi\otimes\psi|\mathcal{P}(O_2,I)\ket{\psi\otimes\psi} \nonumber \\
& =\frac{1}{2}\big(o^{(1)}_\lambda-o^{(1)}_\kappa\big)\big(o^{(2)}_\lambda-o^{(2)}_\kappa\big)\ .
\label{fact.id.fact.ent}
\end{align}
The above difference is zero if and only if either $o^{(1)}_\lambda=o^{(1)}_\kappa$ or $o^{(2)}_\lambda=o^{(2)}_\kappa$, which is not true in general for generic commuting $O_1,O_2$. According to the definition of \textit{local operators} in \cite{benatti2020entanglement}, this means that, in general, non-entangled-I states do not satisfy the factorisation of expectations for (symmetric) local operators, which ({ in the distinguishable particles case}) highlights the presence of non-local correlation  ({ entanglement}).

\subsubsection{Analogy with  finite/infinite  exchangeability}
\label{sec:an1}
Hereafter, by using the analogy between finite exchangeability and QT, we challenge this conclusion and, in particular, the definition \eqref{eq:symmOper} of local operators. 

Consider a coin tossing experiment with two coins ($r=d=2$). { We choose the following diagonal matrices to define polynomials over the probabilities of the coin sides.}
\begin{equation}
\label{eq:O1O2_1}
O_1 =\begin{bmatrix}
1 & 0\\
0 & 0
\end{bmatrix}, ~~O_2 =\begin{bmatrix}
0 & 0\\
0 & 1
\end{bmatrix}.
\end{equation}
{ They commute in the sense that $O_1O_2-O_2O_1=0$. These diagonal matrices are defined for the single coins, we then extend them to two coins using the approach described in \cite[Sec.\ 3.1.1]{benatti2020entanglement}.}
This results in the following symmetrised versions:
$$
\begin{array}{rcl}
\mathcal{P}(O_1,I) &=&\begin{bmatrix}
2 & 0 & 0 & 0\\
0 & 1 & 0 & 0\\
0 & 0 & 1 & 0\\
0 & 0 & 0 & 0\\
\end{bmatrix},\vspace{1mm}\\
\mathcal{P}(I,O_2) &=&\begin{bmatrix}
0 & 0 & 0 & 0\\
0 & 1 & 0 & 0\\
0 & 0 & 1 & 0\\
0 & 0 & 0 & 2\\
\end{bmatrix},\vspace{1mm}\\
\mathcal{P}(O_1,I)\mathcal{P}(I,O_2)  &=&\begin{bmatrix}
0 & 0 & 0 & 0\\
0 & 1 & 0 & 0\\
0 & 0 & 1 & 0\\
0 & 0 & 0 & 0\\
\end{bmatrix},
\end{array}
$$
which are all diagonal operators and, therefore, they are valid observables for the  tossing of two finitely exchangeable coins. Indeed, these operators define the following polynomials
\begin{align}
\nonumber
g_1(\x^\dagger,\x) &=(\otimes_2 {\bf x})^\dagger \mathcal{P}(O_1,I) (\otimes_2 {\bf x})\\
&=2(x_1^\dagger x_1)^2+2(x_1^\dagger x_1)(x_2^\dagger x_2) \\
&=2\theta_1^2+2\theta_1\theta_2 \\
\nonumber
g_2(\x^\dagger,\x) &=(\otimes_2 {\bf x})^\dagger \mathcal{P}(I,O_2) (\otimes_2 {\bf x})\\
&=2(x_1^\dagger x_1)(x_2^\dagger x_2)+2(x_2^\dagger x_2)^2\\
&=2\theta_1\theta_2+2\theta_2^2 \\
\nonumber
g_{12}(\x^\dagger,\x) &=(\otimes_2 {\bf x})^\dagger \mathcal{P}(O_1,I)\mathcal{P}(I,O_2)(\otimes_2 {\bf x})\\
&=2(x_1^\dagger x_1)(x_2^\dagger x_2)\\
&=2\theta_1\theta_2.
\end{align}

For a pure state (that is, for a choice of the probability of Heads $\theta_1$ and Tails $\theta_2$ of the coin), the \textit{local operators} condition \eqref{fact.id.fact.ent} requires that:
$$
\begin{aligned}
&2\theta_1\theta_2-(2\theta_1^2+2\theta_1\theta_2)(2\theta_1\theta_2+2\theta_2^2)\\
&=-2 \theta_1(1-\theta_1)=0,\\
\end{aligned}
$$
(we have exploited that $\theta_2=1-\theta_1$), which is zero if and only if either $\theta_1=0$ or $\theta_1=1$. The \textit{local operators} condition \eqref{fact.id.fact.ent} is therefore violated by the independent tossing of two identical coins with $\theta_1\in (0,1)$.  For instance, for  $\theta_1=0.5$, the difference is equal to $-2 \theta_1(1-\theta_1)=-0.5$, which coincides with the value computed by using  \eqref{fact.id.fact.ent} { (for $O_1,O_2$ defined in \eqref{eq:O1O2_1})}.
This  shows that the argument about \textit{local operators} is problematic. If it were true -- { by exploiting the connection between entangled states being equivalent to finite exchangeability and non-entangled states being equivalent to infinite exchangeability} --  this would imply that the uniform probability  $P(t_1=H,t_2=H)=\theta_1^2=0.25$, $P(t_1=H,t_2=T)=P(t_1=T,t_2=H)=\theta_1\theta_2=0.25$ and $P(t_1=T,t_2=T)=(1-\theta_1)(1-\theta_2)=0.25$ is finitely (entangled) but not infinitely exchangeable  (not entangled), which is clearly false.

The key issue here is the definition of local operators, and we will use the analogy with exchangeability to properly define them. From the results in Section \ref{sec:Ent}, it is immediate verify that, for any  an observable $G$,
    $$
    \begin{aligned}\widehat{L}\Big( (\otimes_r {\bf x})^\dagger G (\otimes_r {\bf x})\Big)
    &=\widehat{L}\Big( (\otimes_r {\bf x})^\dagger \tfrac{\Pi_i^\dagger G\Pi_j +\Pi_j^\dagger G\Pi_i}{2} (\otimes_r {\bf x})\Big),
    \end{aligned}$$ for all $\Pi_i,\Pi_j \in \mathbb{P}_2$, where $\mathbb{P}_2$ denotes the  collection of all permutation matrices of two particles,
    and
    $$\begin{aligned}\widehat{L}\Big( (\otimes_r {\bf x})^\dagger G (\otimes_r {\bf x})\Big)
    &=\widehat{L}\Big( (\otimes_r {\bf x})^\dagger \Pi_{sym}G \Pi_{sym} (\otimes_r {\bf x})\Big).        
    \end{aligned}$$
{ This holds because $ \Pi_{sym} (\otimes_r {\bf x})= (\otimes_r {\bf x})$ from Theorem \ref{th:2}.}
The symmetries in exchangeability (and, analogously, indistinguishability) thus define \textit{equivalence classes} of observables,  consisting of all permuted gambles with the same expectation as $Tr(G M)$, where $M=\widehat{L}\Big( (\otimes_r {\bf x})(\otimes_r {\bf x})^\dagger  \Big)$.
This equivalence class contains physical observables, that is operators that are  symmetric to permutations. We can thence calculate the expectation value of a non-physical observable $G$ by simply choosing any physical observable in the equivalence class $\{G\}^{eq}$ and computing its expectation value.
It is immediate to show 
that
\begin{align}
&\frac{O_1 \otimes I+I \otimes O_1}{2} \in \{O_1 \otimes I\}^{eq}\\
&\frac{I \otimes O_2+O_2 \otimes I}{2} \in \{I \otimes O_2\}^{eq}\\
&\frac{O_1 \otimes O_2+O_2 \otimes O_1}{2} \in \{O_1 \otimes O_2\}^{eq}
\end{align}
In fact, notice for instance that
$$\Pi_{sym} \frac{O_1 \otimes O_2+O_2\otimes O_1}{2} \Pi_{sym}=\Pi_{sym} (O_1 \otimes  O_2) \Pi_{sym}$$
for each $O_1,O_2$.
 Therefore, it holds that
{
 \begin{equation}
\begin{aligned}
& (\otimes_2 {\bf x})^\dagger\tfrac{O_1 \otimes O_2+O_2 \otimes O_1}{2}(\otimes_2 {\bf x})-  \\
& (\otimes_2 {\bf x})^\dagger\tfrac{O_1 \otimes I+I \otimes O_1}{2}(\otimes_2 {\bf x})\\
&\cdot (\otimes_2 {\bf x})^\dagger\tfrac{I \otimes O_2+O_2 \otimes I}{2}(\otimes_2 {\bf x}) \\
&=(\otimes_2 {\bf x})^\dagger(O_1 \otimes O_2)(\otimes_2 {\bf x})- \nonumber \\
&(\otimes_2 {\bf x})^\dagger(O_1 \otimes I)(\otimes_2 {\bf x})\\
&\cdot (\otimes_2 {\bf x})^\dagger(I \otimes O_2)(\otimes_2 {\bf x})
=0,
\end{aligned}
\label{fact.id.fact.entA}
\end{equation}
or, using the bra-ket notation:}
 \begin{equation}
\begin{aligned}
& \bra{\psi\otimes\psi}\tfrac{O_1 \otimes O_2+O_2 \otimes O_1}{2}\ket{\psi\otimes\psi}-  \\
& \bra{\psi\otimes\psi}\tfrac{O_1 \otimes I+I \otimes O_1}{2}\ket{\psi\otimes\psi}\\
&\cdot \bra{\psi\otimes\psi}\tfrac{I \otimes O_2+O_2 \otimes I}{2}\ket{\psi\otimes\psi} \\
&=\bra{\psi\otimes\psi}(O_1 \otimes O_2)\ket{\psi\otimes\psi}- \nonumber \\
& \bra{\psi\otimes\psi}(O_1 \otimes I)\ket{\psi\otimes\psi}\\
&\cdot \bra{\psi\otimes\psi}(I \otimes O_2)\ket{\psi\otimes\psi} 
=0.
\end{aligned}
\label{fact.id.fact.entA}
\end{equation}
This shows that the difference in \eqref{fact.id.fact.entA} is zero when we account for the constant $1/2$ in the computations. To sum up, entangled-I states satisfy the factorisation of expectations for (symmetric) local operators, once we define these operators within the equivalence class of the local operators for distinguishable particles.

\subsection{Effective distinguishability}
As done in \cite{benatti2020entanglement}, we  consider now 
 particles described by an ``external'' degree of freedom with values
$L,R$, and an ``internal'' degree of freedom with values $\uparrow,\downarrow$. Therefore, the single-particle Hilbert space is  spanned by the  orthogonal states $\ket{L,\uparrow},\ket{L,\downarrow},\ket{R,\uparrow},\ket{R,\downarrow}$. Among the vectors of the two-particle symmetrized Hilbert
space, we focus on  those of the form \cite{benatti2020entanglement}:
\begin{align}
\label{eq:indi1}
\ket{\phi^\textnormal{ind}_1}
  & =\frac{1}{\sqrt{2}}\Big[  \ket{ L,\uparrow}\otimes\ket{ R,\downarrow} +
  \ket{ R,\downarrow}\otimes\ket{ L,\uparrow} \Big].\ 
\end{align}
Suppose $L$ and $R$ are orthogonal single-particle states, such as when particles are in separate spatial regions (left and right), so that $\langle L\ket{ R}=0$. Then the external degree of freedom, which can be identified as position, can be used to distinguish the particles, as if they were labelled. Indeed, by measuring and then fixing the spatial degree of freedom, one can assign the particle in the region $L$ as particle $1$, and the particle in $R$, as particle $2$. 

The label attribution is incompatible with particle identity. However,
with respect to the observables $\mathcal{P}(P_L\otimes O_1,P_R\otimes O_2)$, where  
$P_{L,R}$ are the projectors $\ket{ L}\bra{ L}$ and $\ket{ R}\bra{R}$ onto states localised in the left, respectively right region, while $O_1$ and $O_2$ are arbitrary observables relative to the internal degree of freedom,  we cannot distinguish the states satisfying Eq. \eqref{eq:indi1}  from those of the form
\begin{align}
\label{eq:dis1}
\ket{\phi^\textnormal{dist}_1} & =\ket{L,\uparrow}\otimes\ket{R,\downarrow}.
\end{align}
In other words, 
the mean values
of the  observables $\mathcal{P}(P_L\otimes O_1,P_R\otimes O_2)$ satisfy the equalities:
\begin{align}
& \bra{\phi^\textnormal{ind}_1}\mathcal{P}\big(P_L\otimes O_1,P_R\otimes O_2\big)\ket{\phi^\textnormal{ind}_1}= \nonumber \\
& =\bra{L,\uparrow}P_L\otimes O_1\ket{L,\uparrow}\bra{R,\downarrow}P_R\otimes O_2\ket{R,\downarrow}\nonumber\\
\label{eq:distind}
& =\bra{\phi^\textnormal{dist}_1} \big(P_L\otimes O_1\big)\otimes \big(P_R\otimes O_2\big)\ket{\phi^\textnormal{dist}_1}\ .
\end{align}

{
The same equality is also true for 
\begin{align}
\label{eq:dis2}
\ket{\phi^\textnormal{dist}_2} & =\ket{R,\downarrow} \otimes \ket{L,\uparrow},
\end{align}
as
\begin{align}
& \bra{\phi^\textnormal{ind}_1}\mathcal{P}\big(P_L\otimes O_1,P_R\otimes O_2\big)\ket{\phi^\textnormal{ind}_1}= \nonumber \\
\label{eq:distind1}
& =\bra{\phi^\textnormal{dist}_2}  \big(P_R\otimes O_2\big)\otimes \big(P_L\otimes O_1\big) \ket{\phi^\textnormal{dist}_2},
\end{align}
where  $\ket{\phi^\textnormal{dist}_1},\ket{\phi^\textnormal{dist}_2}$ are the ``non-symmetric'' parts of $\ket{\phi^\textnormal{ind}_1}$.
}

According to \cite{benatti2014entanglement}, this is an argument against Entanglement-I. The reason is that one can construct two-particle states of the form \eqref{eq:indi1}, therefore
Entangled-I, that, by using spatial orthogonality, can be made effectively correspond to distinguishable particle separable states, such as those satisfying Eq. \eqref{eq:dis1} (or Eq. \eqref{eq:distind1}).
Therefore, \cite{benatti2014entanglement} claims that distinguishable particle-entanglement cannot be recovered from Entanglement-I.

\subsubsection{Analogy with  finite/infinite  exchangeability}
Hereafter, we leverage again the similarities between finite exchangeability and indistinguishability to clarify this counter-example. The derivations presented in Section \ref{sec:Ent} need a modification to account for the inherent  partial exchangeability  of this example.

We make the analogy using a pair of coins, and we assume that the possibility space is given by:
$$
\left\{\coin{L\strut}\coin{$\uparrow$\strut},\coin{L\strut}\coin{$\downarrow$\strut},\coin{R\strut}\coin{$\uparrow$\strut},\coin{R\strut}\coin{$\downarrow$\strut}\right\}
$$
{ We have two types of coins, one with faces $L/R$ and the other with faces  $\uparrow/\downarrow$.}
In this case, by
$t_1, t_2,\dots,t_r$, we denote the results of $r$-tosses of the pair, such as for example:
$$
\underset{t_1}{\coin{R\strut}\coin{$\uparrow$\strut}}~~~\underset{t_2}{\coin{L\strut}\coin{$\uparrow$\strut}}~~~\underset{t_3}{\coin{L\strut}\coin{$\uparrow$\strut}}.
$$

We assume that the sequence $t_1, t_2,\dots,t_r$ is finitely exchangeable. For two tosses of the pair, this for instance means that
$$
\begin{aligned}
P\left(\coin{R\strut}\coin{$\uparrow$\strut},\,\coin{L\strut}\coin{$\uparrow$\strut}\right)=P\left(\coin{L\strut}\coin{$\uparrow$\strut},\,\coin{R\strut}\coin{$\uparrow$\strut}\right),\\
P\left(\coin{R\strut}\coin{$\uparrow$\strut},\,\coin{L\strut}\coin{$\downarrow$\strut}\right)=P\left(\coin{L\strut}\coin{$\downarrow$\strut},\,\coin{R\strut}\coin{$\uparrow$\strut}\right).\\
\end{aligned}
$$
Technically, this is a definition of \textit{finite partial exchangeability} as proposed by de Finetti in \cite{de1980condition}. Noticed that the definition assumes the pair being exchangeable, not the single coins. This means that in general:
\begin{align}
\label{eq:noexchcase}
P\left(\coin{R\strut}\coin{$\uparrow$\strut},\,\coin{L\strut}\coin{$\downarrow$\strut}\right)\neq P\left(\coin{L\strut}\coin{$\uparrow$\strut},\,\coin{R\strut}\coin{$\downarrow$\strut}\right),
\end{align}
where we have exchanged only the first coin in the pair. 
We then adapt Proposition \ref{prop:kerns} as follows. Let $\theta_L,\theta_R$ be the probability for face-L and face-R and $\theta_\uparrow,\theta_\downarrow$ the probability for face-$\uparrow$ and face-$\downarrow$.  Then we have that:
 \begin{equation}
\label{eq:reprThsigned0Coins}
\begin{aligned}
  &P(t_1,\dots,t_r) \\
  &= \int_\Theta (\theta_L\theta_{\uparrow})^{n_1}(\theta_L\theta_{\downarrow})^{n_2} 
(\theta_R\theta_{\uparrow})^{n_3}(\theta_R\theta_{\downarrow})^{n_4}
d\nu({\boldsymbol \theta}),  
\end{aligned}
\end{equation}
where $n_1$ is the number of times the pair landed on $L,\uparrow$,  $n_2$ is the number of times the pair landed on $L,\downarrow$ and so on.

{
To make the connection with QT in the context under consideration, we introduce a  complex vector ${\bf x}=[x_{11},x_{12},x_{21},x_{22}]^\top$ in $\mathbb{C}^4$, such that ${\bf x}^\dagger {\bf x}=1$. We can then express the probabilities of the four outcomes of the two coins as:
$$
\begin{aligned}
&[\theta_L\theta_{\uparrow},\theta_L\theta_{\downarrow},\theta_R\theta_{\uparrow},\theta_R\theta_{\downarrow}]=[x_{11}^\dagger x_{11}, x_{12}^\dagger x_{12},
x_{21}^\dagger x_{21}, x_{22}^\dagger x_{22}].
\end{aligned}
$$
and consider $r=2$ rolls.

Observe that, expressed in the ``language of coins'', the assumption that the particles are in separate spatial regions (left and right) means that  
$\widehat{L}(\theta_L\theta_\uparrow\theta_L\theta_\uparrow)=\widehat{L}((x_{11}^\dagger x_{11})^2)=0$, and also $\widehat{L}((x_{11}^\dagger x_{11})(x_{12}^\dagger x_{12})),\widehat{L}((x_{12}^\dagger x_{12})^2)$ are all zeros (similarly for the quasi-expectations obtained replacing $\theta_L$ by $\theta_R$).   In other words, we assign zero quasi-expectation to  the cases in which the coin with face $L/R$ lands on the same side, either $L$ or $R$. 
There remain only eight cases:
{\small
$$
\begin{aligned}
&\left(\coin{L\strut}\coin{$\uparrow$\strut},\,\coin{R\strut}\coin{$\uparrow$\strut}\right), \left(\coin{L\strut}\coin{$\downarrow$\strut},\,\coin{R\strut}\coin{$\downarrow$\strut}\right),\left(\coin{L\strut}\coin{$\uparrow$\strut},\,\coin{R\strut}\coin{$\downarrow$\strut}\right),\left(\coin{L\strut}\coin{$\downarrow$\strut},\,\coin{R\strut}\coin{$\uparrow$\strut}\right),\\
&\left(\coin{R\strut}\coin{$\uparrow$\strut},\,\coin{L\strut}\coin{$\uparrow$\strut}\right), \left(\coin{R\strut}\coin{$\downarrow$\strut},\,\coin{L\strut}\coin{$\downarrow$\strut}\right),\left(\coin{R\strut}\coin{$\downarrow$\strut},\,\coin{L\strut}\coin{$\uparrow$\strut}\right),\left(\coin{R\strut}\coin{$\uparrow$\strut},\,\coin{L\strut}\coin{$\downarrow$\strut}\right).\\
\end{aligned}
$$}
The elements within the same column are exchangeable. By leveraging the spatial degrees of freedom to distinguish the two pairs of coins, one effectively uses the $L/R$ outcomes as contextual variables to label the coin showing the $\uparrow$ or $\downarrow$ face as ``left'' or ``right''. In other words, the outcome $L/R$ is used to select one of the rows above. 
Since the probabilities associated with elements in each row are not finitely exchangeable (as noted in \eqref{eq:noexchcase}), they are not subject to any exchangeability constraints.  

Therefore, the argument in \cite{benatti2020entanglement} against ``Entanglement I'' (that is, that one can construct two-particle states Entangled-I such that, by exploiting spagtial orthogonality, they effectively correspond to separable (non-entangled) states of distinguishable particles) appears rather weak. As we have shown, exchangeability imposes constraints only on the columns, not on the rows. Hence, there is no reason why ``Entanglement I'' (finite exchangeability) should constrain the rows.

However, we agree with \cite{benatti2020entanglement}
 that, the equivalence in Theorem \ref{th:2} solely provides a mathematical representation of indistinguishability and, as such, we should not use it directly to define a notion of entanglement. Entanglement should not depend on the definition  of objective reality that the modeller has in their mind.
However, we believe that this mathematical representation (and  the analogy between classical finite exchangeability and quantum theory) can help to better understand and compare the various definitions of entanglement for indistinguishable particles. }

\section{Conclusions}
We have established a link between the mathematical representation of finite and infinite exchangeability in classical probability and the mathematical model of indistinguishable bosons in QT. We have demonstrated the parallelism between approaches from two `independent' research communities: probability theorists and quantum physicists. Both communities have developed related mathematical models, with similar structure and purpose, to tackle analogous problems, such as negative probabilities, symmetries and bounds with respect to the classical probability limit. They both have also obtained finite dimensional representations, via $L,\widehat{L}$ and $\rho$, for computing probabilities instead of relying on infinite dimensional objects such as the signed measure $\nu$.

We believe that what above points to a deeper and fundamental connection between the two areas. For instance, the representations in Propositions \ref{prop:kerns} and \ref{prop:thBern} can actually be seen as specific instances of representations in Generalised Probabilistic Theory \cite{birkhoff1936logic,mackey2013mathematical,jauch1963can,hardy2011foliable,hardy2001quantum,barrett2007information,chiribella2010probabilistic,barnum2011information,van2005implausible,pawlowski2009information,dakic2009quantum,fuchs2002quantum,brassard2005information,mueller2016information,coecke2012picturing,Caves02,Appleby05a,Appleby05b,Timpson08,longPaper,Fuchs&SchackII,mermin2014physics,pitowsky2003betting,Pitowsky2006,benavoli2016quantum,benavoli2017gleason,popescu1998causality,navascues2010glance,plavala2021general}. Moreover, as discussed in Section \ref{sec:application} (Application), the strong affinities between the two perspectives  can help to clarify the definition of entanglement for indistinguishable particles. As future work, in addition to digging into the aforementioned two research avenues, we  plan to explore how the interpretation of density matrices in QT as quasi-expectations might yield new insights into quantum information theory. For example, we believe that this interpretation and connection could prove useful in comprehending measures of uncertainty developed in quantum information theory, such as Von Neumann's entropy and quantum discord, and why, surprisingly in certain instances (quantum discord for separable density matrices), they deviate from their classical counterparts.
Finally, we intend to investigate the feasibility of extending our results to fermions.

\appendix

 \section{Quasi-expectation operator}
 \label{app:qexp}
 Consider a vector  of variables ${\bf x}$ taking value in the possibility space $\Omega$, and a vector space $\mathcal{F}$ of real-valued bounded functions  on ${\bf x}$ including the constants. Let $\mathcal{F}^+$ be the closed convex cone \footnote{A subset $\mathcal{C}$ of a real-vector space $\mathcal{F}$ is a cone if for each $f \in \mathcal{F}$ and positive scalar $\alpha>0$, the element  $\alpha f$ is in $\mathcal{C}$. A cone $\mathcal{C}$ is a convex cone if $\alpha f +\beta g$  belongs to $\mathcal{C}$, for any scalars  $\alpha,\beta>0$ and $f,g \in \mathcal{F}$.} of nonnegative functions in $\mathcal{F}$. 

An expectation operator is defined as follows.

\begin{definition}[{\cite[Sec.\ 2.8.4]{walley1991}}]\label{def:expect}
 Let $\widehat{L}$ be a  linear functional $\widehat{L}: \mathcal{F}\rightarrow \mathbb{R}$. $\widehat{L}$ is an \textbf{expectation operator}  if it satisfies the following property: 
\begin{equation}
\label{eq:Asup}
     \widehat{L}(g)\geq \sup c \text{ s.t. } g-c \in \mathcal{F}^+,  \tag{A} 
\end{equation}
for every $g \in \mathcal{F}$ and $c$ is the constant function of value $c$. 
\end{definition}
It can be easily verified that \eqref{eq:Asup} is equivalent to:
\begin{equation}
\label{eq:Ainf}
     L(g)\geq \inf_{{\bf x} \in \Omega} g({\bf x}).
\end{equation}
In the following, to simplify the notation, we simply write $\inf g$ instead of $\inf_{{\bf x} \in \Omega} g({\bf x})$. Linearity and \eqref{eq:Ainf} are the two properties that define a classical expectation operator. Indeed, from these two properties, we can derive that
\begin{itemize}
 \item $L(0)=0$;
 \item $0\stackrel{A}{=}L(0)=L(g-g)\stackrel{linearity}{=}L(g)+L(-g)$ and so $L(g)=-L(-g)$;
\end{itemize}
which, together with $L(-g)\geq \inf -g=-\sup g$, leads to
\begin{equation}
\label{eq:infsupbound}
 \inf g \leq L(g) \leq \sup g.
\end{equation}
This means that $L(g)$ is a `weighted-average' of $g$: the weights being the probability measure associated to the expectation operator; note in fact that $\inf g \leq \int_{\Omega} g\, dp \leq \sup g$ for any probability measure $p$. This formulation in terms of probabilities is not necessary. Indeed, we can more generally work with expectation operators.

A quasi-expectation operator is instead defined as follows.

\begin{definition}\label{def:TET}
 Let $\widehat{L}$ be a linear functional $\widehat{L}: \mathcal{F}\rightarrow \mathbb{R}$ and $\mathcal{C}^+$ be a closed convex cone (including the constants) such that $\mathcal{C}^+\subset \mathcal{F}^+$. We call $\hat{L}$ a \textbf{quasi-expectation} operator (QEO)  if it satisfies
 \begin{equation}
 \label{eq:Astar}
   \widehat{L}(g)\geq \sup c \text{ s.t. } g-c \in \mathcal{C}^+, \tag{A$^*$}   
 \end{equation}
 for every $g \in \mathcal{F}$.
\end{definition}
Let $\underline{c}_g$ be equal to the supremum value of $c$ such that $g-c \in \mathcal{C}^+$. It can be verified that \eqref{eq:Astar} implies:
 \begin{equation}
 \label{eq:inequcg}
   \widetilde{L}(g)\geq \underline{c}_g, \text{  where}   \inf_{{\bf x} \in \Omega} g({\bf x})\geq \underline{c}_g.   
 \end{equation}
 
A QEO (conservatively) relaxes  property \eqref{eq:Asup} by providing a lower bound  $\underline{c}_g$ to $\inf  g$.

From property \eqref{eq:Astar}, similarly to what was done for expectation operators, we can derive
\begin{equation}
\label{eq:infsupboundgen}
\underline{c}_g \leq \widetilde{L}(g) \leq \overline{c}_g,
\end{equation}
with $\overline{c}_g$ be equal to the infimum value of $c$ such that $c-g \in \mathcal{C}^+$, where 
\begin{equation}
\label{eq:infsupboundgen1}
\underline{c}_g  \leq \inf g  \leq \sup g \leq \overline{c}_g.
\end{equation}
Notice that since the external inequalities of Eq.~\eqref{eq:infsupboundgen1} can be strict for some $g$, we cannot in general define a QEO as an integral with respect to a probability measure and, therefore,  a QEO cannot be a `weighted average'. In other words, a QEO is not a classical expectation operator. In general, in order to  write a QEO as  an integral and satisfy \eqref{eq:infsupboundgen1}, we need to introduce some negative values:
$$
\widehat{L}(g)=\int_{\Omega} g\, d\nu,
$$
where $\nu$ is a signed-measure. 
 
\section{Proofs}

\subsection{Proof of Theorem \ref{th:repr41}}
We will first prove the first statement.

About the trace, by exploiting linearity of $\widehat{L}$, note that 
$$
Tr(M)=\widehat{L}\Big( (\otimes_r {\bf x})^\dagger I (\otimes_r {\bf x})\Big)
$$
where $I$ is the identity matrix. Remember that, by an abuse of notation, by $c$ we denote the constant polynomial of value $c$.
For $g({\bf x},{\bf x}^\dagger)=(\otimes_r {\bf x})^\dagger I (\otimes_r {\bf x})$, we have that $g-c$ is equal to $(1-c)(\otimes_r {\bf x})^\dagger I (\otimes_r {\bf x})$. This polynomial belongs to $\mathcal{S}_r^+$ provided that the matrix $(1-c)I$ is PSD. Therefore, $\sup c$, such that $(1-c)I$ is PSD, is equal to $c=1$. Similarly, we can consider the polynomial $-g-c$, with $g$ as before, and note that $\sup c$ is equal to $-1$. Therefore, we have shown that
$$
\widehat{L}(g)\geq 1, ~~\widehat{L}(-g)=-\widehat{L}(g)\geq -1
$$
which implies that $\widehat{L}(g)=1$. 
For the proof about $M$ being PSD, note that for any polynomial $(\otimes_r {\bf x})^\dagger G (\otimes_r {\bf x})$ with $G$ PSD, we have that 
$\widehat{L}(g)\geq0$. This implies that
$Tr(GM)\geq0 $ for all PSD matrices $G$. From a well-known result in linear algebra this implies that $M$ is PSD.

For the second part of the theorem, we first notice that 
$(\otimes_r {\bf x})^\dagger D (\otimes_r {\bf x})=\mathbf{1}^\top D(\otimes_r\boldsymbol{\theta})$, where  $\mathbf{1}$ is a column vector of ones of suitable dimension. From \eqref{eq:reprThsigned}, this allows us to rewrite
$$
P(t_1,\dots,t_r) = L(\theta_1^{n_1}\theta_2^{n_2}\cdots \theta_5^{n_5}\theta_6^{n_6})=L(\mathbf{1}^\top D(\otimes_r\boldsymbol{\theta})).
$$
We will now prove that, for each fixed $L$, there exists one $\widehat{L}$ such that $L(\mathbf{1}^\top D(\otimes_r\boldsymbol{\theta}))=\widehat{L}((\otimes_r {\bf x})^\dagger D (\otimes_r {\bf x}))$ for each diagonal $D$. This shows that we can represent any finitely exchangeable $P$ via $\widehat{L}$ and that, moreover, the relative $P$ is nonnegative and normalised.

The result is obtained by showing that, for any polynomial $g(\otimes_r\boldsymbol{\theta})=\mathbf{1}^\top D(\otimes_r\boldsymbol{\theta})=g({\bf x},{\bf x}^\dagger)=(\otimes_r {\bf x})^\dagger D (\otimes_r {\bf x})$
with $D$ diagonal, having defined
\begin{equation}
\label{eq:Aproof}
     L(g)\geq \sup c_1 \text{ s.t. } g-c_1 \in \mathcal{B}_r^+,  
\end{equation}
and
\begin{equation}
\label{eq:A1proof}
     \widehat{L}(g)\geq \sup c_2 \text{ s.t. } g-c_2 \in \mathcal{S}_r^+,  
\end{equation}
we have that $c_1=c_2$. 

This is immediate, because $g-c_2=(\otimes_r {\bf x})^\dagger(D-c_2I)(\otimes_r {\bf x})$. This polynomial is in the cone of $\mathcal{S}_r^+$ when the coefficients $diag(D-c_2I)$ are all nonnegative. The same condition, being nonnegative,  holds for the coefficients of the polynomial in $\boldsymbol{\theta}$. Note that, we can write any polynomial $g(\otimes_r\boldsymbol{\theta})$ as $\mathbf{1}^\top D(\otimes_r\boldsymbol{\theta})$, where  $\mathbf{1}$ is the column vector of ones. 
Therefore, $g-c_1=\mathbf{1}^\top (D-c_1I)(\otimes_r\boldsymbol{\theta})$. Since $\mathbf{1}^\top (D-c_1I)=diag(D-c_1I)$, the polynomial $g-c_1$ belongs to the cone $\mathcal{B}_r^+$ whenever the coefficients $diag(D-c_1I)$ are all nonnegative, equivalently when $D-c_1 I$ is PSD.
Then the statement follows by convexity of the cones $\mathcal{B}_r^+,\mathcal{S}_r^+$ and Proposition \ref{prop:thBern}.

\subsection{Equivalent definitions of Bose-symmetric density matrix}
\label{app:bosesymm}

\begin{proposition}
\label{eq:propoPiPi}
The following definitions of Bose-symmetric density matrix $\rho$ (for $r$ particles)  are equivalent:
\begin{align}
\label{eq:0a}
\rho&=\Pi_{sym} \rho \Pi_{sym}\\
\label{eq:0b}
\rho&=\Pi_i\rho=\rho\Pi_i^\dagger \text{ for each permutation matrix $\Pi_i$}.
\end{align}
\end{proposition}
We start  the proof by showing that condition \eqref{eq:0b} implies condition \eqref{eq:0a}. From $\rho=\Pi_i\rho$ for each $\Pi_i$, we obtain that $
\rho= \frac{1}{N!} \sum_{i} \Pi_i\rho$. Then, by  the definition \eqref{eq:symm} stating that $\Pi_{sym}:=\frac{1}{N!} \sum_{i} \Pi_i$
it follows that
\begin{align}
\label{eq:1p}
\rho=\Pi_{sym} \rho.
\end{align}
Similarly, from $\rho=\rho\Pi_i^\dagger$ and the fact that $(\cdot)^\dagger$ distributes over sum, for each $\Pi_i$, 
we get that $
\rho= \rho(\frac{1}{N!} \sum_{i}\Pi_i^\dagger )= \rho(\frac{1}{N!} \sum_{i}\Pi_i )^\dagger$. By applying  
definition \eqref{eq:symm} and since $\Pi_{sym}^\dagger= \Pi_{sym}$ (see for instance  \cite[B-2-c]{cohen2020quantum}), from \eqref{eq:0b}, we have that
\begin{align}
\label{eq:2p}
\rho=\rho\Pi_{sym}.
\end{align}

From equation\eqref{eq:1p}, by multiplying both sides by $\Pi_{sym}$ from the right, we obtain that
$\rho \Pi_{sym}=\Pi_{sym}\rho \Pi_{sym}$. Finally, using Eq.~\eqref{eq:2p}, we obtain that:
 $$
 \rho = \Pi_{sym}\rho \Pi_{sym}.
 $$
 
To verify that  condition \eqref{eq:0a} implies condition \eqref{eq:0b}, let us assume the former. But since $\Pi_i \Pi_{sym}=\Pi_{sym}$ (see for instance  \cite[B-2-c]{cohen2020quantum}), $\Pi_{sym}^\dagger= \Pi_{sym}$ and therefore $\Pi_{sym}=(\Pi_i \Pi_{sym})^\dagger=\Pi_{sym}^\dagger\Pi_i^\dagger = \Pi_{sym}\Pi_i^\dagger$, it is immediate to
 derive that
 $$
 \Pi_i \rho =  \Pi_i \Pi_{sym}\rho \Pi_{sym}=\Pi_{sym}\rho \Pi_{sym}=\rho, 
 $$
 and that
 $$
 \rho\Pi_i^\dagger  =  \Pi_{sym}\rho \Pi_{sym}\Pi_i^\dagger =\Pi_{sym}\rho \Pi_{sym}=\rho. 
 $$
     
\subsection{Proof Theorem \ref{th:2}}\label{app:thm2}
In order to prove Theorem \ref{th:2}, we need to introduce some notation and state a lemma.

First, we recall some results from \cite{Benavoli2021f}. They will allow  us to rewrite
any density matrix of $r$-particles as
$\rho=\widehat{L}((\z_1\otimes\z_2\otimes\cdots\otimes\z_r)(\z_1\otimes\z_2\otimes\cdots\otimes\z_r)^\dagger)$
where  $\z_i \in \mathbb{C}^n$ (n=2 for a qubit) such that $\z_i^\dagger \z_i=1$, where $\widehat{L}$ is a linear operator from the monomials represented by the elements of the matrix $(\z_1\otimes\z_2\otimes\cdots\otimes\z_r)(\z_1\otimes\z_2\otimes\cdots\otimes\z_r)^\dagger$ to $\mathbb{C}$, such that the matrix $\widehat{L}((\z_1\otimes\z_2\otimes\cdots\otimes\z_r)(\z_1\otimes\z_2\otimes\cdots\otimes\z_r)^\dagger)$ is PSD and with trace one ($\widehat{L}$ preserves the constants).

We can equivalently rewrite the elements of $\rho_{ij}$ as
$ (y_{\boldsymbol{\alpha},\boldsymbol{\beta}})_{\boldsymbol{\alpha},\boldsymbol{\beta}}:=(\widehat{L}(\z^{\boldsymbol{\alpha}}(\z^\dagger)^{\boldsymbol{\beta}}))_{\boldsymbol{\alpha},\boldsymbol{\beta}}$
where $\z:=[\z_1^\top, \z_2^\top,\dots,\z_r^\top]^\top$ and $\boldsymbol{\alpha}$ and $\boldsymbol{\beta}$ are a binary encoding of the indexes $i$ and, respectively $j$.  In particular, we have $\boldsymbol{\alpha}:=[\boldsymbol{\alpha}_1,\dots,\boldsymbol{\alpha}_{r}]$ where  $\boldsymbol{\alpha}_k$ is a vector in $\mathbb{N}^n$ which has all zero entries expect one element which is equal to one. The notation  $\z^{\boldsymbol{\alpha}}$ must be interpreted as $\prod_{i=1}^r\prod_{j=1}^n z_{ij}^{\alpha_{ij}}$.
\begin{example}
    For instance, for $n=2$  and $r=2$ (two particles), we have that
$\z:=[z_{11},z_{12},z_{21},z_{22}]^\top$. We can then rewrite for instance $\rho_{11}$ as 
$\widehat{L}(z_{11}z_{21}(z_{11}z_{21})^\dagger)=\widehat{L}(\z^{\boldsymbol{\alpha}}(\z^\dagger)^{\boldsymbol{\beta}})=\widehat{L}(\z^{[1,0,1,0]}(\z^\dagger)^{[1,0,1,0]})$, where
$\boldsymbol{\alpha}=[\boldsymbol{\alpha}_1,\boldsymbol{\alpha}_2]$
with $\boldsymbol{\alpha}_1=[1,0]$ and $\boldsymbol{\alpha}_2=[1,0]$, and $\boldsymbol{\beta}=\boldsymbol{\alpha}$.
Note that, we have
$$
\begin{aligned}
  &\z^{[1,0,1,0]}=z_{11}^1 z_{12}^0z_{21}^1z_{22}^0=z_{11}z_{21}\\
   &(\z^\dagger)^{[1,0,1,0]}=z^\dagger_{11}z^\dagger_{21}.\\ 
\end{aligned}
$$
Introducing the notation $ (y_{\boldsymbol{\alpha},\boldsymbol{\beta}})_{\boldsymbol{\alpha},\boldsymbol{\beta}}:=(\widehat{L}(\z^{\boldsymbol{\alpha}}(\z^\dagger)^{\boldsymbol{\beta}}))_{\boldsymbol{\alpha},\boldsymbol{\beta}}$, we have that the elements of the matrix $\rho$, for the example  $n=r=2$, can be denoted as follows:
\begin{align}
\label{eq:rhoy}
\nonumber
\rho=&\left[\begin{smallmatrix}
y_{[1,0,1,0],[1,0,1,0]} & y_{[1,0,1,0],[1,0,0,1]} & y_{[1,0,1,0],[0,1,1,0]}  & y_{[1,0,1,0],[0,1,0,1]} \\
y_{[1,0,0,1],[1,0,1,0]} & y_{[1,0,0,1],[1,0,0,1]} & y_{[1,0,0,1],[0,1,1,0]}  & y_{[1,0,0,1],[0,1,0,1]} \\
y_{[0,1,1,0],[1,0,1,0]} & y_{[0,1,1,0],[1,0,0,1]} & y_{[0,1,1,0],[0,1,1,0]}  & y_{[0,1,1,0],[0,1,0,1]} \\
y_{[0,1,0,1],[1,0,1,0]} & y_{[0,1,0,1],[1,0,0,1]} & y_{[0,1,0,1],[0,1,1,0]}  & y_{[0,1,0,1],[0,1,0,1]} \\
\end{smallmatrix}\right].
\end{align}
where $y_{[1,0,1,0],[1,0,1,0]}=\widehat{L}(z_{11}z_{21}(z_{11}z_{21})^\dagger)$ and so on.
\end{example}

If $\rho$ is the density matrix of $r$ indistinguishable bosons, then it has to satisfy the constraint $\Pi_{sym} \rho \Pi_{sym}= \rho$. We can then prove the following lemma.
\begin{lemma}
\label{lem:appy}
Let $\rho_{\boldsymbol{\alpha},\boldsymbol{\beta}}=\widehat{L}(\z^{\boldsymbol{\alpha}}(\z^\dagger)^{\boldsymbol{\beta}})=y_{\boldsymbol{\alpha},\boldsymbol{\beta}}$. If $\rho$ satisfies $\Pi_{sym} \rho \Pi_{sym}= \rho$ then for all the indexes $\boldsymbol{\alpha},\boldsymbol{\beta}$, the elements $\rho_{\boldsymbol{\alpha},\boldsymbol{\beta}}$ have to satisfy the following constraints:
$$
\begin{aligned}
&y_{[\boldsymbol{\alpha}_1,\dots,\boldsymbol{\alpha}_r],[\boldsymbol{\beta}_1,\dots,\boldsymbol{\beta}_r]}=y_{[\boldsymbol{\alpha}_1,\dots,\boldsymbol{\alpha}_r],[\boldsymbol{\beta}_{\pi(1)},\dots,\boldsymbol{\beta}_{\pi(r)}]}\\&=y_{[\boldsymbol{\alpha}_{\pi(1)},\dots,\boldsymbol{\alpha}_{\pi(r)}],[\boldsymbol{\beta}_1,\dots,\boldsymbol{\beta}_r]}   =y_{[\boldsymbol{\alpha}_{\pi(1)},\dots,\boldsymbol{\alpha}_{\pi(r)}],[\boldsymbol{\beta}_{\pi(1)},\dots,\boldsymbol{\beta}_{\pi(r)}]}  
\end{aligned}
$$
for any permutation $\pi$ of the labels of the variables.
\end{lemma}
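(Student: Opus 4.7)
The plan is to exploit the projection property of the symmetriser $\Pi$ together with the hypothesis $\Pi\rho\Pi=\rho$ to deduce that each permutation operator acts trivially on $\rho$ from both sides; translating this into matrix elements will give the four equalities in the statement. Recall that $\Pi=\frac{1}{r!}\sum_\sigma P_\sigma$, where $P_\sigma$ is the unitary that permutes the $r$ tensor factors according to $\sigma$. A simple relabeling of the sum shows $P_\pi \Pi=\Pi P_\pi=\Pi$ for every permutation $\pi$, and in particular $\Pi^2=\Pi$.

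From $\Pi^2=\Pi$ and $\Pi\rho\Pi=\rho$ I would first deduce $\Pi\rho=\rho\Pi=\rho$, by left- respectively right-multiplying the hypothesis by $\Pi$. Combining this with $P_\pi\Pi=\Pi P_\pi=\Pi$ yields
\begin{equation*}
P_\pi\rho = P_\pi\Pi\rho = \Pi\rho = \rho, \qquad \rho P_\pi = \rho\Pi P_\pi = \rho\Pi = \rho,
\end{equation*}
so both $P_\pi\rho=\rho$ and $\rho P_\pi=\rho$ hold for every permutation $\pi$. Taking matrix elements in the computational basis indexed by multi-indices $\boldsymbol{\alpha}=[\boldsymbol{\alpha}_1,\dots,\boldsymbol{\alpha}_r]$, and using the action $P_\pi\ket{e_{\boldsymbol{\alpha}_1},\dots,e_{\boldsymbol{\alpha}_r}}=\ket{e_{\boldsymbol{\alpha}_{\pi(1)}},\dots,e_{\boldsymbol{\alpha}_{\pi(r)}}}$, the two identities read $\rho_{[\boldsymbol{\alpha}_{\pi(1)},\dots,\boldsymbol{\alpha}_{\pi(r)}],\boldsymbol{\beta}}=\rho_{\boldsymbol{\alpha},\boldsymbol{\beta}}$ and $\rho_{\boldsymbol{\alpha},[\boldsymbol{\beta}_{\pi(1)},\dots,\boldsymbol{\beta}_{\pi(r)}]}=\rho_{\boldsymbol{\alpha},\boldsymbol{\beta}}$. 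Recalling $\rho_{\boldsymbol{\alpha},\boldsymbol{\beta}}=y_{\boldsymbol{\alpha},\boldsymbol{\beta}}$ then gives the second and third equalities in the lemma, and the fourth equality follows by composing them (equivalently, from $P_\pi\rho P_\pi=\rho$).

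The core argument is essentially a one-liner once $P_\pi\Pi=\Pi$ is in hand, so the only obstacle is bookkeeping: one must fix a convention for how $P_\pi$ permutes the tensor factors and verify that the induced action on the multi-index matches the indexing used in the statement. Since $\{(\boldsymbol{\alpha}_{\pi(1)},\dots,\boldsymbol{\alpha}_{\pi(r)}) : \pi\in S_r\}$ is the same orbit whether one labels by $\pi$ or $\pi^{-1}$, no ambiguity arises in the final four-way equality. Note that neither positivity nor the trace-one normalisation of $\rho$ enter the proof; only the symmetry $\Pi\rho\Pi=\rho$ is used.
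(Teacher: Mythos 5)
Your proof is correct and follows essentially the same route as the paper's: both rest on the absorption property $P_\pi\Pi=\Pi P_\pi=\Pi$ of the symmetriser, deduce $P_\pi\rho=\rho P_\pi=\rho$ from $\Pi\rho\Pi=\rho$, and then read off the index permutations on the matrix elements $y_{\boldsymbol{\alpha},\boldsymbol{\beta}}$. The only cosmetic difference is that you pass through the intermediate identities $\Pi\rho=\rho\Pi=\rho$ via $\Pi^2=\Pi$, whereas the paper substitutes $\rho=\Pi\rho\Pi$ directly and absorbs $P_\pi$ on either side; the content is identical.
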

\begin{proof}
 This follow from the following known property of the symmetriser:
$$
\Pi_{sym} P_{\pi} = P_{\pi} \Pi_{sym}= \Pi_{sym}
$$
where $P_{\pi}$ is any  permutation matrix (defined by  the permutation $\pi$ of the labels) associated with the $r$ particles.  { Note that, previously in the manuscript, we denoted permutation matrices with  $\Pi_i$ instead of $P_{\pi}$. Here we use $P_{\pi}$ to state that this is the permutation matrix relative to the label permutation $\pi$. } Given that $\rho=\Pi_{sym} \rho \Pi_{sym}$, this means that
$$
\begin{aligned}
P_{\pi}\rho&=P_{\pi}\Pi_{sym}\rho\Pi_{sym}=\Pi_{sym}\rho\Pi_{sym} P_{\pi}\\
&=P_{\pi}\Pi_{sym}\rho\Pi_{sym} P_{\pi}=\Pi_{sym}\rho\Pi_{sym}=\rho.
\end{aligned}
$$
Therefore, by linearity of $\widehat{L}$, we have that
$$
\begin{aligned}
&\rho:= \widehat{L}( (\z_1\otimes\z_2\otimes\cdots\otimes\z_r)(\z_1\otimes\z_2\otimes\cdots\otimes\z_r)^\dagger)\\&= P_{\pi}\widehat{L}( (\z_1\otimes\z_2\otimes\cdots\otimes\z_r)(\z_1\otimes\z_2\otimes\cdots\otimes\z_r)^\dagger)\\
&= \widehat{L}( P_{\pi}(\z_1\otimes\z_2\otimes\cdots\otimes\z_r)(\z_1\otimes\z_2\otimes\cdots\otimes\z_r)^\dagger)\\
&= \widehat{L}((\z_1\otimes\z_2\otimes\cdots\otimes\z_r)(\z_1\otimes\z_2\otimes\cdots\otimes\z_r)^\dagger P_{\pi})\\&= \widehat{L}(P_{\pi}(\z_1\otimes\z_2\otimes\cdots\otimes\z_r)(\z_1\otimes\z_2\otimes\cdots\otimes\z_r)^\dagger P_{\pi})    
\end{aligned}
$$
for every $P_{\pi}$. Note that, the  permutation matrix rearranges the $\boldsymbol{\alpha}$-th element of the vector $\z_1\otimes\z_2\otimes\cdots\otimes\z_r$ according to the following rule:
$$
\begin{aligned}
   &P_{\pi}( \z_1\otimes\cdots\otimes\z_r)_{[\boldsymbol{\alpha}_1,\dots,\boldsymbol{\alpha}_r]} =( \z_1\otimes\cdots\otimes\z_r)_{[\boldsymbol{\alpha}_{\pi(1)},\dots,\boldsymbol{\alpha}_{\pi(r)}]}\\
&=\z^{[\boldsymbol{\alpha}_{\pi(1)},\dots,\boldsymbol{\alpha}_{\pi(r)}]} 
\end{aligned}
$$
which ends the proof.
\end{proof}
For the example in \eqref{eq:rhoy}, the results of the above lemma implies 
for instance that $$\begin{aligned}&y_{[1,0,0,1],[1,0,0,1]}=y_{[1,0,0,1],[0,1,1,0]}\\
&=y_{[0,1,1,0],[1,0,0,1]}=y_{[0,1,1,0],[0,1,1,0]}.\end{aligned}$$

We are now ready to prove Theorem \ref{th:2}. First we prove that $\mathcal{S}_2 \subseteq\mathcal{S}_1$. Note that
\begin{align}
\nonumber
    \Pi_{sym} M \Pi_{sym}&=\widehat{L}\Big( \Pi_{sym}(\otimes_r {\bf x})(\otimes_r {\bf x})^\dagger\Pi_{sym}^\dagger\Big)\\
    &=\widehat{L}\Big( (\otimes_r {\bf x})(\otimes_r {\bf x})^\dagger\Big)=M,
\end{align}
where we have exploited \eqref{eq:symm} to derive  $\Pi_{sym}(\otimes_r {\bf x}) = \otimes_r {\bf x}$. Note in fact that, $\otimes_r {\bf x}$ is clearly symmetric to permutations, being the r-times tensor product of ${\bf x}$. 

We will now prove that $\mathcal{S}_1 \subseteq\mathcal{S}_2$. This part is based on the same insight that motivates the second quantisation formalism in QT. We start from Lemma \ref{lem:appy} and note that in
$$
\begin{aligned}
&y_{[\boldsymbol{\alpha}_1,\dots,\boldsymbol{\alpha}_r],[\boldsymbol{\beta}_1,\dots,\boldsymbol{\beta}_r]}=y_{[\boldsymbol{\alpha}_1,\dots,\boldsymbol{\alpha}_r],[\boldsymbol{\beta}_{\pi(1)},\dots,\boldsymbol{\beta}_{\pi(r)}]}\\&=y_{[\boldsymbol{\alpha}_{\pi(1)},\dots,\boldsymbol{\alpha}_{\pi(r)}],[\boldsymbol{\beta}_1,\dots,\boldsymbol{\beta}_r]}   =y_{[\boldsymbol{\alpha}_{\pi(1)},\dots,\boldsymbol{\alpha}_{\pi(r)}],[\boldsymbol{\beta}_{\pi(1)},\dots,\boldsymbol{\beta}_{\pi(r)}]}  
\end{aligned}
$$
the permutation acts on the indexing vectors. Each  $\boldsymbol{\alpha}_i$ is an indicator vector which denotes a component of the vector $\z_i$ and the permutation simply swaps this indicator vectors. We now introduce the occupation number vector $[k_1,k_2,\dots,k_n]$ which, for a given $\boldsymbol{\alpha}_1,\dots,\boldsymbol{\alpha}_r$, counts the number of variables $\z_i$ that are in their state 1, state 2,\dots, or state $n$. All the vectors $[\boldsymbol{\alpha}_{\pi(1)},\dots,\boldsymbol{\alpha}_{\pi(r)}]$ have the same occupation number. For instance, for the previous example $y_{[1,0,0,1],[1,0,0,1]}=y_{[1,0,0,1],[0,1,1,0]}=y_{[0,1,1,0],[1,0,0,1]}=y_{[0,1,1,0],[0,1,1,0]}$. The occupation number for $[1,0,0,1]$ is $[1,1]$ because variable 1 is in state 1 and variable 2 in state 2. $[0,1,1,0]$ has the same occupation number.
Therefore, we can write 
 $y_{[1,0,0,1],[1,0,0,1]}=y_{[1,0,0,1],[0,1,1,0]}=y_{[0,1,1,0],[1,0,0,1]}=y_{[0,1,1,0],[0,1,1,0]} \rightarrow y_{[k_1,k_2][k'_1,k'_2]}$
 with $k_1,k_2;k'_1,k'_2=1$,  where we used $'$ to denote the occupation number relative to $\boldsymbol{\beta}$.
 The representation of   $y_{[1,0,1,0],[1,0,1,0]}$ using the occupation number as index is $y_{[2,0],[2,0]}$.

 We conclude the proof by noticing that we can represent the elements of $\widehat{L}( (\otimes_r {\bf x})(\otimes_r {\bf x})^\dagger)$ as 
 $(\widehat{L}( {\bf x}^{\bf k}({\bf x}^\dagger)^{{\bf k}'}))_{{\bf k},{\bf k}'}=y_{{\bf k},{\bf k}'}$, where
 ${\bf k}=[k_1,\dots,k_n]$ is such that each $k_i \in \mathbb{N}$ denotes the degree of the i-th variable in the monomial. For instance, $\widehat{L}(x_1x_2(x_1x_2)^\dagger)=y_{[1,1],[1,1]}$ and $\widehat{L}(x_1^2(x_1^2)^\dagger)=y_{[2,0],[2,0]}$.

\bibliographystyle{elsarticle-num-names} 
 \bibliography{biblio}%

\end{document}